\documentclass[final,12pt]{clear2025} 
\title[Kernel Balancing in Tree-based Methods]{Kernel Balancing in Tree-based Methods}
\usepackage{times}
\usepackage{algorithm} 
\usepackage{multirow}
\usepackage{dsfont} 
\usepackage{mathtools} 
\usepackage{booktabs} 
\usepackage{bm} 
\usepackage{physics} 
\input{ee.sty} 
\usepackage{float}
\usepackage{comment}
\usepackage{thmtools}
\usepackage{enumitem}
\usepackage{array}
\usepackage{rotating}
\usepackage[section]{placeins}
\usepackage{subcaption}
\usepackage{dsfont}
\usepackage{xurl}
\RequirePackage{natbib}
\RequirePackage{url}
\RequirePackage{hyperref}

\usepackage{longtable} 
\newenvironment{notes}
  {\par\medskip
   \begin{minipage}{\linewidth}
   \footnotesize\emph{Notes: }}
{\end{minipage}}
\renewcommand{\proofname}{Proof}
\renewenvironment{proof}[1][\proofname]
{%
  \par\noindent
  {\bfseries\upshape #1.}\ %
}
{  \jmlrQED}

\clearauthor{
 \Name{Karolina Gliszczy\'nska-Schroeder} \Email{karolina.gliszczynska(at)vwl.uni-due.de}\\
 \addr Chair of Econometrics, Faculty of Business Administration and Economics, University of Duisburg-Essen, Universitätsstraße 12, 45117 Essen, Germany%
}

\begin{document}
\newtheorem{assum}{Assumption}
\maketitle


\begin{abstract}%
Studying heterogeneous treatment effects has become essential in experimental and observational studies. A critical assumption for obtaining reliable treatment effect estimates is overlap, which requires that treated and control units have sufficiently similar covariate distributions. Poor overlap may limit the effectiveness of estimators, especially those based on propensity scores, potentially leading to unreliable results. We investigate the effectiveness of kernel balancing (KBal) \citep{hazlett2016kernel} as an alternative to propensity score methods for conditional average treatment effect (CATE) estimation, particularly in settings with overlap violations. Building on optimization-based balancing approaches, we integrate KBal weights into tree-based methods, specifically, causal forests \citep{athey_generalized_2019} and the X-Learner (XRF) \citep{kunzel_metalearners_2019}, to assess their impact on bias reduction and estimation precision. Monte Carlo evidence shows that KBal achieves near-exact balance in a transformed feature space, thereby improving treatment effect estimation in cases where traditional reweighting methods struggle due to extreme weights, finite-sample bias, or insufficient removal of pre-existing confounding bias.
We apply the proposed methods to the semi-synthetic IHDP benchmark dataset. Overall, the results indicate that KBal leads to performance improvements, especially in settings with nonlinear treatment effects and limited overlap, making it a useful alternative to propensity score methods. 

\end{abstract}

\begin{keywords}%
    Heterogeneous treatment effects, causal forests, covariate balance, weighting
\end{keywords}

\begin{jelclass}
    C14, C21, C52, C62
\end{jelclass}

\section{Introduction}
\setcounter{footnote}{0}

To obtain a valid estimator of a causal effect in the absence of randomized controlled and natural experiments, it is essential to balance the distributions of observable covariates between the treatment and control group \citep{cousineau_estimating_2022}. Various methods exist to achieve this balance,  including optimization-based approaches recently proposed in the causal inference literature \citep{Wang_2019,Zhao2019,hazlett2016kernel, wong_kernel_based_2017}. 
However, most optimization-based methods primarily focus on deriving weights that balance covariates, with little emphasis on how these weights are subsequently used for causal effect estimation. Typically, the weights are used in a simple (weighted) difference-in-means estimator, which does not directly address treatment-effect heterogeneity. When used in this way, \citet{cousineau_estimating_2022} have empirically demonstrated that optimization-based methods perform worse than regression-based or mixed methods that combine optimization-based weighting with regression adjustment in terms of estimation accuracy. This suggests that optimization-based weights may be most useful not as stand-alone estimators, but as components of more flexible causal learning procedures.  Generally, \citet{cousineau_estimating_2022} note that there is still limited awareness of how these methods can be effectively integrated with other causal inference techniques. 

Motivated by this limitation, we combine optimization-based weighting with methods for estimating conditional average treatment effects (CATE). We focus on two popular tree-based approaches: causal forests  \citep{athey_recursive_2016,  wager_estimation_2018,athey_generalized_2019} and the X-learner \citep{kunzel_metalearners_2019} with a random forest \citep{Breiman2001} as base learner. Causal forests are considered as a state-of-the-art method for estimating heterogeneous treatment effects \citep{variable_import_CF} and are especially popular in applied research because they provide a relatively accessible plug-and-play framework \citep{Applied_research_CF}. Metalearners, including the X-learner, are similarly attractive because they decompose the CATE estimation problem into subproblems, such as outcome prediction, treatment-effect imputation, and the final aggregation of treatment-effect estimates. This allows flexible base learners such as random forests, BART \citep{Chipman_2010}, or neural networks \citep{james2021introduction} to be used. Since the introduction of the S-, T-, and X-learners, the literature has developed several related approaches, including the R-learner \citep{nie2021quasi}, and the PW- and RA-learners \citep{PW_learner_RA_learner}.

In this paper, we present a perspective on integrating optimization-based weighting methods with established machine learning methods by using optimized weights as auxiliary weighting components. Specifically, we introduce two strategies. In the first approach, we use the optimized weights as sample weights in the causal forest algorithm, thereby targeting covariate imbalance within the tree-growing and aggregation process itself. In the second approach, we use these weights to replace the propensity score within the X-learner algorithm, allowing treatment effect aggregation to rely on covariate balance rather than estimated treatment assignment probabilities. Both approaches are particularly interesting in scenarios where the classical propensity score is close to 0 and 1, where inverse-probability-based methods may become unstable due to extreme weights.

The paper is structured as follows. Section \ref{sec:Potential_Outcome} reviews the potential outcome framework and outlines the role of the propensity score in causal machine learning methods and their sensitivity to extreme values. Section \ref{ch:Treebased_Kernel} introduces kernel balancing (KBal) as a special case of optimization-based methods, which is then implemented in tree-based methods. Results of a simulation study are presented in Section \ref{ch:KBAL_sim_study}, comparing the performance of the KBal variants and evaluating the coverage of confidence intervals around CATE estimates. Section \ref{ch:emp_app} illustrates the application of these methods in a semi-synthetic setting, focusing on the IHDP dataset \citep{IHDP, Hill}, where the effect of home visits from doctors on cognitive outcomes in low birth-weight, premature infants is studied. Our results suggest that incorporating KBal improves estimation accuracy in settings with nonlinear treatment structure or moderate imbalance. 
\section{Foundations of Causal Inference and Treatment Effect Estimation} \label{sec:Potential_Outcome}

We use the Neyman-Rubin potential outcome framework  \citep{ Rubin01032005, imbens_rubin_2010}. Consider a sample of $N$ units, indexed by $i = 1, \ldots, N$. For each unit $i$, let $D_i \in \{0,1\}$ be the binary treatment indicator and let $X_i \in \mathbb{R}^K$ be a $K$-dimensional covariate or feature vector. The potential outcome for unit $i$ under treatment assignment $d \in \{0,1\}$ is denoted by $Y_i(d)$, where $Y_i(0)$ represents the outcome under control, and $Y_i(1)$ the outcome under treatment.
With this framework, the individual treatment effect (ITE) is
\begin{equation}\label{eq:ITE}
    \tau_i \coloneqq Y_i(1)-Y_i(0). 
\end{equation}

To distinguish between various treatment effects of interest, let $\mathcal S$ denote the observed sample of $N$ individuals, with
$\mathcal S_1\subset\mathcal S$ and $\mathcal S_0\subset\mathcal S$ representing the subsamples for the treatment and control group, respectively,
and let $N_1$ and $N_0$ denote their corresponding sample sizes.
 For conditional estimands, we introduce the notation $\mathcal S_{\mid x}$
to denote a local subset of observations whose covariates lie in a neighborhood of $x$,
and we use the shorthand $X_i=x$ to denote this local conditioning\footnote{When $X$ is continuous, conditioning on $X_i=x$ is therefore interpreted through
membership in $\mathcal S_{\mid x}$ (e.g., a partition or tree leaf).}.
 Generally, throughout the paper, lowercase letters denote generic realizations of random variables. Thus, $x$ denotes a generic value of the covariate vector $X_i$ or in estimation contexts, a test point at which a function is evaluated. Accordingly, covariate-dependent functions, such as $\tau(x)$ introduced below, are written as functions of $x$ when referring to a generic value or test point, and as functions of $X_i$ when evaluated at the observed covariates of unit $i$.
 
The potential outcome framework rests mainly on the three assumptions. The first assumption is the Stable Unit Treatment Value Assumption (SUTVA), which rules out two potential complications. First, it excludes interference between units, so that the potential outcome of unit $i$ is affected only by its own treatment status and not by the treatment assignment of other units. Second, it requires that the treatment is well defined, meaning that there are no different or hidden versions of the same treatment level.

\begin{restatable}[Stable unit treatment value assumption (SUTVA)]{assum}{sutvaAssumption}
\label{SUTVA}
\begin{align*}
\text{If } D_i = d \text{, then } Y_i(d) = Y_i^{obs} \text{ , } \forall d \in \{0, 1 \} \text{ , } \forall ~ i= 1, ..., N . 
\end{align*}
\end{restatable}
Given SUTVA, each unit is associated with two potential outcomes, $Y_i(1)$ and $Y_i(0)$, corresponding to the treated and untreated groups. However, for each unit, only the potential outcome corresponding to the realized treatment status is observed. Hence, the observed outcome can be written as
$$Y_i^{obs} = D_iY_i(1) + (1-D_i)Y_i(0).$$
The next identifying assumption is unconfoundedness. It states that, after conditioning on the observed covariates $X_i$, treatment assignment is independent of the potential outcomes.
\begin{restatable}[Unconfoundedness]{assum}{unconfoundednessAssumption}
\label{KBAL_unconfoundedness}
\begin{align*}
D_i \perp\!\!\!\perp \left(Y_i(1), Y_i(0)\right) \mid X_i.
\end{align*}
\end{restatable}
 
In perfectly randomized experiments, Assumption \ref{KBAL_unconfoundedness} is satisfied because the researcher has direct control over the assignment mechanisms. However, in observational studies, this assumption can, at best, be verified indirectly. 

Third, we assume common support to estimate treatment effects everywhere in the covariate space. Let the conditional treatment probability or propensity score be \begin{equation}
\label{propensity} 
p(x) \coloneqq Pr(D_i=1|X_i=x).
\end{equation} 

\begin{restatable}[Overlap]{assum}{overlapAssumption}
\label{overlap} 
There exists \(\epsilon>0\) such that the conditional probability of receiving treatment given pre-treatment covariates $X$ is bounded away from zero and one,  
\begin{align*}
\epsilon < p(x) < 1-\epsilon \quad \text{for all } x \text{ in the support of } X_i.
\end{align*}
\end{restatable}

Ensuring reasonable estimates of the causal effect of interest relies on the overlap assumption. This assumption implies that units with similar pre-treatment covariates have a positive probability of being observed in both the treatment and control groups. In other words, treatment and control units should be comparable in terms of their observed characteristics. In situations where overlap is lacking, causal effect estimators, particularly those involving weighting, can become unstable or highly sensitive to model specification \citep{performance_IPW_Austin}. For a more comprehensive study of the propensity score, the associated inverse probability weighting, and the challenges that arise when the overlap assumption is violated, we refer to Section \ref{sec:propensity}.

Using Assumptions \ref{SUTVA} to \ref{overlap},  
we can identify various treatment effects based on the research question of interest. The average treatment effect (ATE), denoted as $\tau_{ATE}$, is defined as the difference in expected outcomes between individuals assigned to the treatment group and individuals assigned to the control group.  
\begin{equation}\label{eq:ATE}
 \tau_{ATE}\coloneqq \mathbb{E}[Y_i(1)-Y_i(0)].  
\end{equation}

The conditional average treatment effect (CATE), denoted as $\tau(x)$, is the expected difference in outcome for an individual with a specific set of characteristics $X_i$ when they receive treatment compared to when they do not,
\begin{align}
\label{eq:KBAL_CATE}
\tau(x)\coloneqq\mathbb{E}[Y_i(1)|X_i = x] - \mathbb{E}[Y_i(0)| X_i=x]. 
\end{align}
In many applications, one might be more interested in the average treatment effect for either the treatment or the control group rather than the entire population. Let the average treatment effect on the treated (ATT) be defined as
\begin{align}
\label{eq:ATT}
\tau_{ATT} \coloneqq \mathbb{E}[Y_i(1) - Y_i(0) | D_i = 1],
\end{align}
whereas the average treatment effect on the controls (ATC) is 
\begin{align}
\label{eq:ATC}
\tau_{ATC} \coloneqq \mathbb{E}[Y_i(1) - Y_i(0) | D_i = 0].
\end{align}

In this paper, we focus on the CATE and methods primarily geared towards its estimation. However, where appropriate, we will also provide the relevant weights for other treatment effects in our analysis.

\subsection{The Role of Propensity Scores}
\label{sec:propensity}
Weighting methods balance the control and treatment group by applying weights to individuals within each group. These weights are generally obtained through the propensity score $p(x)$ from \eqref{propensity}. The role of the propensity score is highlighted by \citet{Hirano_2003_propscore}, who show that the CATE from \eqref{eq:KBAL_CATE} can be rewritten as
$$
\tau(x)
=
\mathbb{E}\left[
Y_i^{obs}
\left(
\frac{D_i}{p(x)}
-
\frac{1-D_i}{1-p(x)}
\right)
\mid X_i=x
\right].
$$

More broadly, \citet{Rosenbaum_Rubin_propensity} show that adjusting for the propensity score removes bias due to all observed covariates. Therefore, knowing $p(x)$, would yield an unbiased estimator for $\tau(x)$.  However,  since $p(x)$ is unknown in observational studies,  it is typically estimated using logistic regression or random forest \citep{Improving_Propensity_score_2009}. These estimated propensity scores are then used to estimate the causal effects of interest through weighting \citep{rosenbaum1987model,robins1994estimation} or matching \citep{Rosenbaum_Rubin_propensity,rubin_1985_propscore}. We focus on weighting-based approaches, particularly Inverse Probability Weighting (IPW) \citep{robins1994estimation}. For a comprehensive overview of alternative propensity score methods, including covariate adjustment, stratification, and matching, we refer the reader to \citet{Austin_intro_propscore}.
The IPW weights are defined as 
\begin{align*}
w_i^{ipw} = \begin{cases} 
\frac{1}{\hat{p}(X_i)},  &\text{for } D_i=1\\
 \frac{1}{1-\hat{p}(X_i)},  &\text{for } D_i=0.  \notag
 \end{cases}
 \end{align*}
The average treatment effect can then be estimated as
\begin{align}
\hat{\tau}^{ipw}_{ATE}&=  \frac{1}{|\mathcal S|} \sum_{i\in \mathcal S}(2D_i-1)w_i^{ipw} Y_i^{obs}. 
\end{align}
Analogously, a local IPW estimator for the CATE can be written as
\begin{align}
\hat{\tau}^{ipw}(x)
&=
\frac{1}{|\mathcal S_{\mid x}|}
\sum_{i\in \mathcal S_{\mid x}}
(2D_i-1)w_i^{ipw}Y_i^{obs}.
\end{align}

Weighting methods such as IPW are useful, because they transform the observed sample into a weighted pseudo-population in which treated and control units are more comparable in terms of their observed covariates \citep{cousineau_estimating_2022}. As a result, differences in weighted outcomes are less driven by pre-existing covariate imbalance and can more plausibly be interpreted as treatment effects. However, IPW suffers when Assumption \ref{overlap} is violated, that is, when some units have estimated propensity scores close to zero or one.  In such cases, the inverse-probability weights become very large, making the resulting estimates unstable and highly variable \citep{Desail5657}. 
One way to address this violation is to remove individuals from the analysis who fall outside a range of propensity scores and focus only on those with sufficient overlap. This IPW trimming can improve the performance of IPW, but the choice of the cutoff often appears to be arbitrary. A common rule of thumb suggested by \citet{CRUMPRICHARDK_2009Dwlo} is to restrict the sample to individuals with propensity scores within the interval [0.1, 0.9], ensuring “sufficient” overlap. However, this approach may discard a significant proportion of the sample, introducing selection bias. Alternative weighting methods include overlap weights \citep{overlap_weights_li}, which  reweight observations according to $\hat{p}(x)(1-\hat{p}(x))$, assigning more weight to units with moderate propensity scores and downweighting those with extreme scores, entropy balancing \citep{Hainmueller_2012_Entropy} or stabilized weights \citep{Robins_hernan_stabil_w}. While these approaches are promising alternatives to standard IPW,   we focus exclusively on IPW and IPW trimming as a benchmark for evaluating weighting methods. 

\section{Tree-Based Methods with Kernel Weighting}
\label{ch:Treebased_Kernel}
Tree-based methods like random forests \citep{Breiman2001} are a powerful and versatile class of machine learning algorithms that have gained popularity for their ability to estimate the expected outcome  conditional on features, 
\begin{equation}\label{cond_mean_func}
\mu(x) \coloneqq \mathbb{E}[Y_i|X_i=x].
\end{equation}
This estimation is achieved by partitioning the feature space into distinct rectangles and fitting a model within each region. Tree-based models can be interpreted as nearest-neighbor methods with an adaptive neighborhood metric \citep{wager_estimation_2018}. In the traditional $k$-nearest neighbors method, the aim is to locate the $k$ closest data points to a given test point $x$ based on distance metric. However, in the case of trees, all observations that fall into the same leaf $L$ as $x$ for a tree $b$, denoted as $L_b(x)$, are treated as its neighbors. 

We focus on tree-based methods based on the CART procedure by \citet{cart84}. Additionally, we investigate more advanced methods that have evolved from CART, including the causal forest \citep{athey_recursive_2016, wager_estimation_2018}, which are a special case of the generalized random forest 
\citep{athey_generalized_2019}, and the X-learner \citep{kunzel_metalearners_2019} using the random forest as the base learner. The specific utilization of weights, and therefore our contribution, depends on the chosen method. In Section \ref{sec:Weighted_CF}, we introduce the weights directly as additional sample weights within the causal forest algorithm. In contrast, in Section \ref{sec:Weighted_XF}, we use weights as an alternative to propensity scores within the X-learner algorithm. Specifically, we employ kernel balancing \citep{hazlett2016kernel} as weighting method, which is introduced in detail in Section \ref{sec:Kernel_weighting}.
In particular, we aim to investigate whether such optimization-based weighting methods offer a good alternative to traditional weighting or propensity scores.

\subsection{ Kernel Weighting}
\label{sec:Kernel_weighting}
Kernel balancing \citep{hazlett2016kernel} is a flexible non-parametric optimization method that balances the covariates $X_i$ to obtain weights without relying on the observed outcomes. Like most optimization-based balancing methods, kernel balancing obtains the balancing weights by minimizing a discrepancy measure between the weighted group $\mathcal{U}$ and the group $\mathcal{V}$, subject to constraints. While the original work primarily focused on estimating the ATT, we extend the results to estimate the ATE or CATE by choosing the right $\mathcal{U}$ and $\mathcal{V}$, such that the resulting weights aim to create “pseudo-populations” from both the treatment and control groups that resemble the covariate distribution of the overall sample.  Following the notation of \citet{cousineau_estimating_2022} and  \citet{hazlett2016kernel}, we first discuss the basic terminology of balancing, in particular how to calculate weights on two groups to achieve balance, before moving on to the kernel balancing method as a special case. 

\begin{definition} \label{defn:optimization-based weighting}
Given two sets $\mathcal{U}$ and $\mathcal{V}$, an optimization-based weighting problem has the form 
\begin{align*}
&\operatorname{min} \sum_{i \in \mathcal{U}} g(w_i)  \quad \operatorname{subject ~to } \\
& \Bigg | \sum_{i \in \mathcal{U}} w_i \phi_k(X_i) - \frac{1}{|\mathcal{V}|} \sum_{j \in \mathcal{V}} \phi_k(X_j)\Bigg| \leq \delta_k, ~ k=1,\dots,K, \quad w_i \in \mathcal{W},
\end{align*}
where $g(w_i)$ is a convex objective function of the weights $w_i, \phi_k(X_i)$ are constraint functions of the covariates $X_i$, $\delta_k$ are tolerance parameters, $\mathcal{W}$ is a constrained set of all feasible weights and $|\mathcal V|$ denotes the cardinality of $\mathcal V$.
\end{definition}
To calculate the correct weights for our chosen treatment effect, we must select the appropriate sets $\mathcal{U}$ and $\mathcal{V}$ for balancing. \citet{cousineau_estimating_2022} provide a general weighting scheme that shows how different choices of $\mathcal{U}$ and $\mathcal{V}$ correspond to different causal estimands, which is summarized in Table \ref{table:weight_sample}. We use this scheme as a guideline for constructing kernel balancing weights for the ATE and CATE. 

\FloatBarrier
\begin{table}[h]
\centering
\caption{Samples used to construct balancing weights for different causal estimands}
\label{table:weight_sample}
\small
\setlength{\tabcolsep}{18pt}
\vspace{0.2cm}

\begin{tabular}{l r r r r}
\\[-1.8ex]\hline
\hline \\[-1.8ex]
& \multicolumn{2}{c}{$w^{1}$} & \multicolumn{2}{c}{$w^{0}$} \\
\cmidrule(lr){2-3} \cmidrule(lr){4-5}
Causal effect
& $\mathcal{U}$ & $\mathcal{V}$
& $\mathcal{U}$ & $\mathcal{V}$ \\
\hline

ATT
& -- & --
& $\mathcal{S}_{0}$ & $\mathcal{S}_{1}$ \\

ATC
& $\mathcal{S}_{1}$ & $\mathcal{S}_{0}$
& -- & -- \\

ATE
& $\mathcal{S}_{1}$ & $\mathcal{S}$
& $\mathcal{S}_{0}$ & $\mathcal{S}$ \\

CATE
& $\mathcal{S}_{1}$ & $\mathcal{S}_{|x}$
& $\mathcal{S}_{0}$ & $\mathcal{S}_{|x}$ \\

\hline
\end{tabular}

\begin{flushleft}
\small
\textit{Notes:} \(\mathcal{U}\) denotes the group of units to which weights are applied, while \(\mathcal{V}\) denotes the comparison group whose covariate distribution is used as the balancing target. The weights \(w^1\) and \(w^0\) refer to weights for treated and control units, respectively.
\end{flushleft}
\end{table}
\FloatBarrier

For example, suppose we are interested in estimating the ATT. Since the ATT is the treatment effect for the treatment group, the treated units define the target covariate distribution. Hence, only the control group has to be reweighted to resemble the treatment group. In the notation of Definition~\ref{defn:optimization-based weighting}, this corresponds to choosing $\mathcal{U}= \mathcal{S}_{0}$ and $\mathcal{V}= \mathcal{S}_{1}$. Then the optimization problem in Definition~\ref{defn:optimization-based weighting} chooses weights $w_i^0$ for $i\in\mathcal S_0$ by minimizing the objective function $g(w_i)$ subject to the balancing constraints. Examples of $g(w_i)$ are discussed below, for now, it is sufficient to view it as a convex function used to obtain weights with particular attributes. Balance is achieved through constraints of the form
$$\sum_{i\in\mathcal S_0} w_i^0\phi_k(X_i)
\approx
\frac{1}{N_1}\sum_{j\in\mathcal S_1}\phi_k(X_j),
\quad k=1,\ldots,K.$$
Here, $\phi_k(X_i)$ force the covariate features to be balanced.  For example $\phi_k(X_i)=X_{ik}$ for $k=1,\ldots,K$, balances the means of the observed covariates between the weighted control group and the treatment group. More flexible choices may include higher-order terms and interactions. Kernel balancing, discussed below, extends this idea by balancing features implicitly induced by a kernel. The tolerance parameters $\delta_k$ determine how closely the weighted control group must match the treated group. Once the weights are computed, the ATT is estimated by comparing the observed mean outcome of the treatment group with the weighted mean outcome of the reweighted control group
$$\widehat{\operatorname{ATT}} = \frac{1}{N_{1}} \sum_{{i \in \mathcal{S}_{1}}} Y_i - \sum_{{i \in \mathcal{S}_{0}}} w^{0}_i Y_i.$$ 

Note that, only one optimization problem needs to be solved for the estimation of the ATT (or ATC), since only one group is reweighted. For the ATE and CATE, the general weighting scheme in Table~\ref{table:weight_sample} computes separate weights for treatment and control units, $w^1$ and $w^0$, by solving two separate problems. However, for ATE estimation, alternative one-step formulations also exist, in which the weights for both groups are obtained jointly rather than through two separate optimization problems \citep{kallus2019optimal}. We focus on the two-step approach, as it performs well in our setting.

Kernel balancing relies on a linearity assumption, introduced formally below in Assumption \ref{Linearity of Expected Outcome}. It states that the conditional expectation of the outcome can be expressed as a linear function of transformed covariates. 
 The core idea is that if $\mathbb{E}[Y_i|X_i=x]$ lies in the linear span of a rich set of covariate transformations, then kernel balancing finds weights that equalize the weighted means
of these transformations between a set $\mathcal U$ and a target set $\mathcal V$. 
\begin{restatable}[Linearity of Expected Outcome]{assum}{linearityAssumption}
\label{Linearity of Expected Outcome} 
There exist $\theta_0 \in \mathbb{R}^Q$, $\theta_1 \in \mathbb{R}^Q$ and a function $\varphi: \mathbb{R}^K \rightarrow  \mathbb{R}^Q  $, such that the expected conditional outcomes are linear in  $\varphi$, 
$$\mathbb{E}[Y_i(0)| X_i=x ] = \varphi(x)^\top \theta_0 ~ \text{ and }~ \mathbb{E}[Y_i(1)| X_i=x ] = \varphi(x)^\top \theta_1. $$
\end{restatable}
A Gaussian kernel is not necessarily required here. \citet{hazlett2016kernel} suggested that any sufficiently rich nonlinear expansion $\varphi(\cdot)$ is
good as long as the conditional expectation functions are linear in
$\varphi(\cdot)$. 

Unlike propensity score–based methods, kernel balancing does not require Assumption \ref{overlap}. Instead, it relies on Assumption \ref{SUTVA}, \ref{KBAL_unconfoundedness} and the linearity Assumption \ref{Linearity of Expected Outcome}, which reduces the risk of bias arising from misspecifying the propensity score. Nevertheless, optimization-based weights and propensity-score weights are closely related. Propensity-score methods first model the treatment assignment probability and then use inverse probabilities as weights. Optimization-based methods take the opposite perspective. They choose weights directly so that selected covariate features are balanced between groups.

Mathematically, this connection can be explained through convex optimization theory.  The original convex optimization problem is called the primal problem. A dual problem is an alternative formulation of the same optimization task, obtained by attaching Lagrange multipliers to the constraints and optimizing over these multipliers instead of the original decision variables \citep{boyd2004convex}. 
Here, Definition~\ref{defn:optimization-based weighting} is the primal problem and chooses the weights $w_i$ directly, subject to covariate balance constraints. For a class of convex balancing-weight problems, \citet{Wang_2019} show that the corresponding dual problem can be interpreted as fitting a regularized model for inverse propensity-score weights. In simplified form, the resulting weights can be written as $w_i = \rho'\!\left(\phi(X_i)^\top \lambda\right)$, where $\lambda$ denotes the vector of dual variables associated with the balance constraints, and the function $\rho'$ is determined by the chosen objective function $g(w_i)$. Thus, the weights are constructed as functions of the covariates, similar to inverse propensity-score weights, but without explicitly estimating the propensity score. This interpretation should not be understood as a one-to-one equivalence between optimization-based weights and conventional IPW weights. Rather, it shows that both approaches are mathematically connected ways of constructing weights that adjust for covariate imbalance. We can now formulate kernel balancing as a special case of Definition~\ref{defn:optimization-based weighting}.

\begin{definition} \label{defn:kernel-based weighting}
A kernel balancing optimization problem has the following form
\begin{align*}
&\operatorname{min} \left \{\sum_{i \in \mathcal{U}} w_i \log(w_i) \right \} \quad \operatorname{subject ~to } \\
& \sum_{i \in \mathcal{U}} w_i k(X_i,X_l) - \frac{1}{\big | \mathcal{V} \big |} \sum_{j \in \mathcal{V}} k(X_j,X_l) = 0, \quad \forall l \in \{1,...,N\}, \\
&w_i \geq 0 \text{ and } \sum_{i \in \mathcal{U}} w_i =1, 
\end{align*}
where $k(X_i,X_l)=\exp \left (\frac{-||X_i-X_l||^2}{b} \right)$ is a Gaussian kernel with scale parameter $b$. It 
determines how close $X_i$ and $X_l$ must be (in an Euclidean sense) to be considered similar.
\end{definition}

Note that the objective function originally comes from a maximization problem and that its choice is flexible. \citet{Hainmueller_2012_Entropy} suggested using $\sum_{i \in \mathcal{U}} -w_i \log(w_i)$ to maximize the entropy measure, but other objective functions are also possible. For instance,  \citet{owen2001empirical} suggested using $\sum_{i \in \mathcal{U}} \log(w_i)$ to maximize the empirical likelihood, while \citet{Zubizarreta2015} used minimum variance weights as an objective function. A detailed analysis of selecting the “best” objective function is beyond the scope of the present study. Furthermore, kernel balancing enforces exact balance by setting the tolerance parameters $\delta_k = 0$ in the optimization constraints. Since exact balance is often infeasible in practice, the method achieves approximate balance not by relaxing the constraints directly, but by computing a worst-case bound on the bias due to residual imbalance and minimizing this bound instead. For further details on this bias bound and its derivation, we refer to the original paper by \citet{hazlett2016kernel}.

The Gaussian kernel $k(X_i, X_j)$ and the function $\varphi(\cdot)$ from Assumption \ref{Linearity of Expected Outcome} are related, since by  \citet{mercer1909theorem}, for every PSD kernel function $k(\cdot, \cdot)$, there exists a corresponding feature mapping $\varphi(\cdot)$ such that  
\begin{equation}
    k(X_i, X_j) = \langle \varphi(X_i), \varphi(X_j) \rangle.
\end{equation}
 By evaluating the kernel function for all pairs of observations, we construct a kernel matrix that captures these similarity measures. 
\begin{definition} \label{def:kernel_mat}
The kernel matrix $K \in \mathbb{R}^{N \times N}$ is a symmetric, positive semi-definite matrix with elements
\[
K_{ij} = k(X_i, X_j). \]
For each observation $i$, let
\[K_i =
\bigl(k(X_i,X_1), k(X_i,X_2), \ldots, k(X_i,X_N)\bigr)^\top
\in \mathbb{R}^N, \]
which represents the similarity of $X_i$ to all other observations. Furthermore let $$\overline K=\frac{1}{N}\sum_{i=1}^N K_i
\in \mathbb{R}^{N \times 1},$$
denote the empirical average of $K_i$, taken over all observations.
For  the ATE estimation, the weights $w^1$ and $w^0$ from Definition \ref{defn:kernel-based weighting} achieve mean balance on $K$ if
\begin{align}
    \sum_{i \in \mathcal{S}_1} w_i^1 K_i=\bar K =\sum_{i \in \mathcal{S}_0} w_i^0 K_i. 
\end{align}
\end{definition}
We can now extend Theorem 1 from \citet{hazlett2016kernel} to establish unbiasedness of the weighted difference-in-means estimator for the ATE, and to motivate extensions to conditional estimands. 
\begin{theorem} \label{thm:DIM_estimator}
Consider the weighted difference-in-means estimator,
\[ \widehat{\text{DIM}}_w \coloneqq \sum_{i \in \mathcal{S}_1} w_i^1 Y_i - \sum_{i \in \mathcal{S}_0} w_i^0 Y_i, \]
Let $w^0$ and $w^1$ be the optimal weights from Definition \ref{defn:kernel-based weighting}. Suppose that these weights satisfy
\begin{align*}
\sum_{i \in \mathcal{S}_1} w_i^1 \varphi(X_i) &= \sum_{i \in \mathcal{S}_0} w_i^0 \varphi(X_i) = \frac{1}{N}\sum_{i=1}^N \varphi(X_i), \\
\text{with} \quad \sum_{i \in \mathcal{S}_1} w_i^1 &= \sum_{i \in \mathcal{S}_0} w_i^0 = 1, \quad w_i^1, w_i^0 > 0.
\end{align*}
Then, under Assumptions~\ref{SUTVA}, \ref{KBAL_unconfoundedness}, and \ref{Linearity of Expected Outcome}, $\widehat{\mathrm{DIM}}_w$ is unbiased for the ATE, with unbiasedness taken over the common joint distribution of $(X_i,Y_i(1),Y_i(0),D_i)$.
\end{theorem}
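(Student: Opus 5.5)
The plan is to condition on the full covariate and treatment vectors $(\mathbf X,\mathbf D)=\{(X_i,D_i)\}_{i=1}^N$ and then apply the law of iterated expectations. Conditioning is natural because the weights $w^1,w^0$ come from Definition~\ref{defn:kernel-based weighting}, which uses only covariates and treatment labels and never the outcomes. Once we condition, the weights are therefore fixed constants.

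\textbf{Step 1: rewrite observed outcomes as potential outcomes.} By Assumption~\ref{SUTVA}, $Y_i=Y_i(1)$ for $i\in\mathcal S_1$ and $Y_i=Y_i(0)$ for $i\in\mathcal S_0$. Linearity of conditional expectation then gives
\[
\mathbb E[\widehat{\mathrm{DIM}}_w\mid \mathbf X,\mathbf D]=\sum_{i\in\mathcal S_1}w_i^1\,\mathbb E[Y_i(1)\mid \mathbf X,\mathbf D]-\sum_{i\in\mathcal S_0}w_i^0\,\mathbb E[Y_i(0)\mid \mathbf X,\mathbf D].
\]

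\textbf{Step 2: remove the dependence on treatment.} With units i.i.d.\ and Assumption~\ref{KBAL_unconfoundedness}, we have $\mathbb E[Y_i(d)\mid \mathbf X,\mathbf D]=\mathbb E[Y_i(d)\mid X_i]$. Assumption~\ref{Linearity of Expected Outcome} then gives $\mathbb E[Y_i(d)\mid X_i]=\varphi(X_i)^\top\theta_d$.

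\textbf{Step 3: apply the balance condition.} Pulling $\theta_d$ out of the sums and using the assumed balance on $\varphi$, the conditional mean becomes
\[
\Bigl(\tfrac1N\sum_{i=1}^N\varphi(X_i)\Bigr)^\top\theta_1-\Bigl(\tfrac1N\sum_{i=1}^N\varphi(X_i)\Bigr)^\top\theta_0=\frac1N\sum_{i=1}^N\tau(X_i).
\]
Here $\tau(x)=\varphi(x)^\top(\theta_1-\theta_0)$ is the CATE. The normalization $\sum w=1$ is not strictly needed for this step, but it keeps the estimator an affine combination.

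\textbf{Step 4: take outer expectations.} Taking the expectation over the joint distribution of $(\mathbf X,\mathbf D)$ and using identical distribution gives
\[
\mathbb E[\widehat{\mathrm{DIM}}_w]=\frac1N\sum_{i=1}^N\mathbb E[\tau(X_i)]=\mathbb E[Y_i(1)-Y_i(0)]=\tau_{ATE}.
\]
This establishes unbiasedness.

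The main obstacle is Step 2 together with the measurability of the weights. Two points must be handled with care. First, the weights must be $\sigma(\mathbf X,\mathbf D)$-measurable. This holds because the optimization in Definition~\ref{defn:kernel-based weighting} never touches $Y$. Note also that the hypothesis assumes balance holds exactly, which includes feasibility of the balance constraints. Second, the step from $\mathbb E[Y_i(d)\mid \mathbf X,\mathbf D]$ to $\mathbb E[Y_i(d)\mid X_i,D_i]$ needs independence of $(X_j,D_j)_{j\neq i}$ from unit $i$'s potential outcomes. This requires the i.i.d.\ sampling, which I will state explicitly as part of the common joint distribution. After that reduction, unconfoundedness drops the dependence on $D_i$.

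Finally, the proof should note that the result is actually stronger than stated. The estimator is conditionally unbiased for the sample average of $\tau(X_i)$, so unconditional unbiasedness follows directly. This is also the step I would adapt to motivate the conditional version, in which $\frac1N\sum\varphi(X_i)$ is replaced by the average over $\mathcal S_{\mid x}$.
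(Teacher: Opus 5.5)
Your proof is correct and takes essentially the same route as the paper's: SUTVA, writing the conditional outcome means as $\varphi(X_i)^\top\theta_d$, cancelling the $\theta_d$ terms via exact balance on $\varphi$, and closing with iterated expectations given covariates and treatments, under which the weights are fixed. The differences are minor---the paper centers the estimator at $\mathrm{SATE}$ and shows that the leftover weighted residuals $\varepsilon_i^d$ have conditional mean zero, whereas you compute $\mathbb{E}[\widehat{\mathrm{DIM}}_w \mid \mathbf{X},\mathbf{D}] = \frac{1}{N}\sum_{i=1}^N \tau(X_i)$ directly---and your explicit conditioning on the full vectors $(\mathbf{X},\mathbf{D})$, together with i.i.d.\ sampling, makes rigorous a step the paper states only loosely as conditioning on $(X_i,D_i)$.
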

\begin{proof}
  See Appendix \ref{proof:DIM_estimator}.
\end{proof}
\begin{remark}
Although Theorem~\ref{thm:DIM_estimator} is stated in terms of balance on $\varphi(\cdot)$, kernel balancing enforces balance on $K_i$ by construction. As shown in \citet{hazlett2016kernel}, enforcing mean balance on $K_i$ is sufficient to eliminate bias arising from linear functions of $\varphi(X_i)$. Hence, achieving equal means on $\varphi(X_i)$
can be replaced by achieving equal means on $K_i$. 
\end{remark}

\begin{remark}
A similar result to Theorem~\ref{thm:DIM_estimator} holds for the CATE by using a covariate-dependent target distribution, for example a local subset $\mathcal{S}_{|x} \subseteq \mathcal{S}$ of observations around a test point $x$. However, we estimate CATE by combining globally computed ATE-targeting kernel weights with local neighborhood structures induced
by tree-based methods (see Section~\ref{sec:Weighted_CF} and~\ref{sec:Weighted_XF}). A formal analysis of kernel balancing with covariate-specific target distributions
is beyond the scope of this paper.
\end{remark}

\subsection{Weighted Causal Forest}
\label{sec:Weighted_CF}

We first review  the standard causal forest estimator as a special case of the generalized random forest (GRF) framework by \citet{athey_generalized_2019}. We then explain how kernel balancing weights are incorporated as sample weights to obtain the kernel-weighted causal forest used in this paper. We focus on causal forests rather than the causal tree proposed by \citet{athey_recursive_2016}. Both causal forests and causal trees are extensions of the random forest \citep{Breiman2001}, but they differ from standard random forests in several ways. Most fundamentally, the objective shifts from a prediction task to estimating a causal effect.
Furthermore, they rely on “honest” splitting, which involves dividing the data into two disjoint sets, one for constructing the trees and another for estimating treatment effects. Second, the splitting criterion is adapted to the causal setting. While the original random forests select splits to minimize the prediction error of the outcome variable, causal forests split the data to maximize heterogeneity.  In doing so, they reveal how treatment effects vary across the sample. In the causal forest algorithm proposed by \citet{athey_generalized_2019}, $\tau(x)$ is identified as the solution to a local moment condition\footnote{More generally, the GRF framework targets parameters of the form  $\theta(x) := \xi \cdot \tau(x)$, where $\xi$  is a contrast vector. For simplicity and consistency with the rest of our paper, we focus on the $\tau(x)$ component only.}
\begin{equation}
    \mathbb{E}\left[(Y_i - \tau(x) \cdot D_i - c(x))(1, D_i)^\top \;\middle|\; X_i = x \right] = 0 \quad \text{for all } x ,
\end{equation}
where $c(x)$ is an intercept term. To estimate $\tau(x)$, the causal forest algorithm uses a weighting scheme to build the final treatment effect estimate rather than having each tree compute its estimate. For this, we define similarity weights $\alpha_i(x)$ to capture the frequency with which the $i$-observation falls in the same leaf as $x$ across the $B$ trees in the forest:
\begin{align} \label{eq:alpha}
\alpha_{bi}(x) = \frac{\mathds{1}(\{X_i \in L_b(x)\})} {|L_b(x)|},   ~ \alpha_{i}(x) = \frac{1}{B} \sum_{b=1}^B \alpha_{bi}(x).
\end{align}  
Then an estimator for \(\tau(x)\) is given by
\begin{align} \label{eq:theta} 
\hat{\tau}(x) = \left( \sum_{i=1}^n \alpha_i(x) (D_i - D_\alpha)(D_i - D_\alpha)^\top \right)^{-1} 
\sum_{i=1}^n \alpha_i(x) (D_i - D_\alpha)(Y_i - Y_\alpha),
\end{align} 
where \(D_\alpha = \sum \alpha_i(x) D_i\) and \(Y_\alpha = \sum \alpha_i(x) Y_i\). 
In our setting the treatment is binary and scalar, \(D_i\in\{0,1\}\). Hence, the local regression estimator simplifies to
\[
\hat{\tau}(x)
=
\frac{
\sum_{i=1}^{n}
\alpha_i(x)(D_i-D_{\alpha})(Y_i-Y_{\alpha})
}{
\sum_{i=1}^{n}
\alpha_i(x)(D_i-D_{\alpha})^2
}.
\]

Thus, the weights $\alpha_i(x)$ strongly influence whether we obtain a good estimator of $\tau(x)$. To construct favorable splits to produce the weights, the idea is that instead of using an exact loss criterion, which can be quite computationally expensive, the causal forest uses a gradient-based loss criterion to approximate the splitting rule used in the original causal trees, which is summarized in Algorithm \ref{algo:grf_ct} and \ref{algo:gradienttree} in Appendix \ref{app:algorithm}.  

To obtain a kernel-weighted version of the causal forest, we use kernel weights as sample weights. While the original framework does not formally introduce sample weights as a theoretical concept, they are supported in the \texttt{R} implementation and online documentation \citep{grf_package}. In the default implementation, each observation receives equal weight. Additional details are provided in Appendix~\ref{app:algorithm}. We use kernel weights only to balance the covariates, thereby reducing potential bias resulting from covariate imbalance.  When kernel weights are incorporated into the splitting criterion, the sample-weighted version of the gradient-based splitting rule is then defined as

$$\tilde{S}_w(C_1, C_2) = \frac{\left(\sum_{i \in C_1} w_i \rho_i\right)^2}{\sum_{i \in C_1} w_i} 
+ 
\frac{\left(\sum_{i \in C_2} w_i \rho_i\right)^2}{\sum_{i \in C_2} w_i}, $$
where $w_i$ are the kernel weights and $\rho_i \in \mathbb{R}$ are pseudo-outcomes computed within each parent node, as shown in Algorithm~\ref{algo:gradienttree}.

\subsection{Weighted X-learner}
\label{sec:Weighted_XF}
Metalearners are a class of meta-learning algorithms that address the problem of estimating the CATE by decomposing it into one or more subproblems. These subproblems can be addressed using any supervised learning or regression methods, the so-called base learners. As noted by \citet{okasa2022metalearners}, base learners can be any black-box method, such as random forests \citep{athey2015machine}, Bayesian regression trees \citep{Hill, single_learner_bart}, or neural networks, as long as they are consistent estimators and satisfy a set of regularity conditions. We focus on the X-learner, introduced by \citet{kunzel_metalearners_2019}, as a novel approach for estimating CATE, with the random forest as a base learner. In Appendix \ref{app:metalearners}, we also briefly review the S-learner and T-learner, which serve as foundations for the X-learner. Algorithm~\ref{algo:xlearner} summarizes the three main steps of the X-learner. 

\FloatBarrier 
\vspace{0.5cm}
\begin{algorithm}[ht]
    \SetAlgoLined
    \KwIn{$(X_i, Y_i, D_i)$, where $i = 1,\dots, N$}

    \textbf{Step 1:} Use a supervised learning or regression algorithm to estimate
    \[
        \mu_1(x) := \mathbb{E}[Y_i(1) | X_i=x], \quad
        \mu_0(x) \coloneqq \mathbb{E}[Y_i(0) | X_i=x], 
    \]
    using any supervised learning or regression algorithm on the treated and control observations, respectively. Denote the estimated functions from the first stage by \(\hat{\mu}_1(x)\) and \(\hat{\mu}_0(x)\). \\[0.5cm]
    \textbf{Step 2:} Use $\hat{\mu}_1(x)$ and $\hat{\mu}_0(x)$ to compute the imputed treatment effects:
   \begin{align*}
        \tilde{Y}^{1}_i &\coloneqq Y_i - \hat{\mu}_0(X_i),\quad i \in \mathcal{S}_1,  \\
        \tilde{Y}^{0}_i &\coloneqq \hat{\mu}_1(X_i) -Y_i,  \quad i \in \mathcal{S}_0.
   \end{align*}
    Then, use the imputed treatment effects as response variables  to estimate
\[
\tau_1(x) \coloneqq \mathbb{E}[\tilde{Y}_i^{1} \mid X_i=x, D_i=1],
\]
\[
\tau_0(x) \coloneqq \mathbb{E}[\tilde{Y}_i^{0} \mid X_i=x, D_i=0].
\]
The corresponding estimates are denoted by $\hat{\tau}_1(x)$ and $\hat{\tau}_0(x)$.    
    \textbf{Step 3:} Define the CATE estimate as a weighted average of $\hat{\tau}_1(x)$ and $\hat{\tau}_0(x)$:
    \[
        \hat{\tau}(x) \coloneqq f(x) \hat{\tau}_0(x) + (1 - f(x)) \hat{\tau}_1(x),
    \]
    where $f(x) \in [0, 1]$ is a weight function.

    \caption{X-Learner Algorithm for Estimating CATE}
    \label{algo:xlearner}
\end{algorithm}
\FloatBarrier 
The first step is to estimate the response functions $\mu_1(x$) and $\mu_0(x)$, which are then are used to compute the imputed treatment effects $\tilde{Y}_i^{1}$ and $\tilde{Y}_i^{0}$. Next, the imputed treatment effects serve as new response variables: $\tilde{Y}^1_i$ is used in the treatment group to estimate $\hat{\tau}_1(x)$, and $\tilde{Y}_0(x)$ is used in the control group to estimate $\hat{\tau}_0(x)$. Finally the CATE estimator is a weighted average of  $\hat{\tau}_1(x)$, and $\hat{\tau}_0(x)$. 

\citet{kunzel_metalearners_2019} suggest using an estimate of the propensity score $p(x)$ as the weight function, or even using $f\equiv0$ or $f\equiv 1$ if the number of treated units is significantly smaller or larger than the number of control units, or vice versa. However, when Assumption \ref{overlap} is violated, the X-learner with propensity score-based weights may produce extreme values. These extreme weights can lead to an over-reliance on either  $\hat{\tau}_1(x)$ or $\hat{\tau}_0(x)$, skewing the final CATE estimate toward one group’s imputed treatment effects.

Therefore, we propose the use of Kbal weights as an alternative to the propensity score weight function for the X-learner. The idea is to replace the last step of the X-learner algorithm with a new weight function $f_{kbal}$.  For the construction of $f_{kbal,i}$, 
we use $w^0$ and $w^1$ from Definition~\ref{defn:kernel-based weighting} and define
\begin{equation} \label{eq:function_kbal_weight}
    f_{kbal,i} = \frac{w_i^{0}}{w_i^{0}+w_i^{1} }\quad i=1,\dots N;
    \end{equation}
such that $f_{kbal,i} \in [0, 1]$ for all $i$.

In contrast to the standard X-learner, where the aggregation weight is written as a function \(f(x)\), the KBal-based aggregation weight is sample-specific because it does not have a closed-form representation as a function of covariates alone and is constructed from sample-level balancing weights; see Appendix~\ref{Appendix_Implementation} for details. Therefore, for each \(i\), we write the modified aggregation step as
\begin{align}
\label{eq:final_kbal_tau}
\hat{\tau}(X_i)
&=
f_{kbal,i}\hat{\tau}_0(X_i)
+
(1-f_{kbal,i})\hat{\tau}_1(X_i).
\end{align}

This sample-specific nature is common to optimization-based balancing weights. As emphasized by \citet{Zubizarreta2020}, such weights are tied to the empirical sample on which the balancing problem is solved. Hence, when new observations are added, the weights must either be recomputed on the enlarged sample or assigned through an additional rule, such as matching to similar observations or estimating a mapping from covariates to weights.

\begin{proposition}
\label{fkbal_conistency}
Let \(\hat{\tau}_0(X_i)\) and \(\hat{\tau}_1(X_i)\) be the second-stage estimators obtained from Algorithm~\ref{algo:xlearner}, and let \(f_{kbal,i}\in[0,1]\) denote the sample-specific aggregation weight constructed in \ref{eq:function_kbal_weight}. If \(\hat{\tau}_0(X_i)\) and \(\hat{\tau}_1(X_i)\) are consistent for \(\tau(X_i)\), then the aggregated estimator in \eqref{eq:final_kbal_tau} is also consistent for \(\tau(X_i)\).
\end{proposition}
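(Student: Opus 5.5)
The argument rests on one fact: a convex combination of two consistent estimators of the same target is consistent, whatever the mixing weight, provided that weight stays in $[0,1]$. The weight $f_{kbal,i}$ may depend on the data, the sample, and even on $N$, but we only ever use that it is bounded. The plan is to write the estimation error of \eqref{eq:final_kbal_tau} as a weighted sum of the two second-stage errors and bound each piece in probability.

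\emph{Decomposition.} Because $f_{kbal,i}+(1-f_{kbal,i})=1$, we can subtract $\tau(X_i)$ from \eqref{eq:final_kbal_tau} and obtain
\begin{align*}
\hat{\tau}(X_i)-\tau(X_i) = f_{kbal,i}\bigl(\hat{\tau}_0(X_i)-\tau(X_i)\bigr) + (1-f_{kbal,i})\bigl(\hat{\tau}_1(X_i)-\tau(X_i)\bigr).
\end{align*}

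\emph{Bound.} The construction \eqref{eq:function_kbal_weight} uses $w_i^0,w_i^1\ge 0$ with a positive denominator, so $0\le f_{kbal,i}\le 1$. The triangle inequality then gives
\begin{align*}
\bigl|\hat{\tau}(X_i)-\tau(X_i)\bigr| \le \bigl|\hat{\tau}_0(X_i)-\tau(X_i)\bigr| + \bigl|\hat{\tau}_1(X_i)-\tau(X_i)\bigr|,
\end{align*}
and this holds deterministically, for every realization of the sample.

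\emph{Conclusion.} Fix $\varepsilon>0$. The inequality above implies
\[
P\bigl(|\hat{\tau}(X_i)-\tau(X_i)|>\varepsilon\bigr)\le P\bigl(|\hat{\tau}_0(X_i)-\tau(X_i)|>\varepsilon/2\bigr)+P\bigl(|\hat{\tau}_1(X_i)-\tau(X_i)|>\varepsilon/2\bigr).
\]
By the hypothesis of Proposition~\ref{fkbal_conistency}, both probabilities on the right tend to zero as $N\to\infty$. Hence $\hat{\tau}(X_i)\to_p\tau(X_i)$.

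\emph{Main obstacle.} The delicate point is that $f_{kbal,i}$ is sample-specific: it is built from the balancing solution on the whole sample, so it is correlated with $\hat{\tau}_0$ and $\hat{\tau}_1$, and it need not converge to any limit. The pathwise bound sidesteps this, since it uses only $f_{kbal,i}\in[0,1]$ and requires neither independence nor convergence of the weights. One thing to state explicitly is the degenerate case $w_i^0+w_i^1=0$. There we adopt a convention such as $f_{kbal,i}=1/2$, and the argument goes through unchanged. If one wants uniform consistency over $i$, the same argument applies with $\sup_i$ in place of a fixed $i$, provided the second-stage estimators are uniformly consistent.
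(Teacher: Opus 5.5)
Your proof is correct and uses the same decomposition as the paper:
\[
\hat{\tau}(X_i)-\tau(X_i)=f_{kbal,i}(\hat{\tau}_0(X_i)-\tau(X_i))+(1-f_{kbal,i})(\hat{\tau}_1(X_i)-\tau(X_i)).
\]
The two arguments differ only in the closing step. The paper notes that the coefficients are bounded and then invokes Slutsky's theorem to conclude that each term is $o_p(1)$. You instead use the deterministic inequality $|\hat{\tau}(X_i)-\tau(X_i)|\le|\hat{\tau}_0(X_i)-\tau(X_i)|+|\hat{\tau}_1(X_i)-\tau(X_i)|$ followed by a union bound.

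Your closing step is the more careful of the two. Slutsky's theorem in its usual form needs the multiplier to converge in probability to a constant, and nothing in the setup guarantees that the sample-specific $f_{kbal,i}$ converges. What the argument actually uses is only boundedness, i.e.\ $O_p(1)\cdot o_p(1)=o_p(1)$. Your pathwise bound makes this explicit and shows directly that no convergence of $f_{kbal,i}$, and no independence between it and the second-stage estimators, is needed.

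Your convention for a vanishing denominator is harmless but not needed in the paper's setting. There the weights are positive: see the condition $w_i^1,w_i^0>0$ in Theorem~\ref{thm:DIM_estimator}, and the implementation assigns unit weights to the target group.
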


\begin{proof}
See Appendix \ref{Appendix_Proofs}.
\end{proof}

To provide intuition, \eqref{eq:final_kbal_tau} illustrates how KBal weights determine the relative influence of $\hat{\tau}_0(X_i)$ and $\hat{\tau}_1(X_i)$. For each observation $i$, $f_{kbal,i}$ is obtained by normalizing $w_i^0$ and $w_i^1$ by their observation-specific sum. If an observation in the control group requires more reweighting for balance, this results a larger aggregation weight $f_{kbal,i}$, increasing reliance on $\hat{\tau}_0(X_i)$. Conversely, larger treatment-side balancing weights produce smaller $f_{kbal,i}$ values, increasing reliance on $\hat{\tau}_1(X_i)$. Note that, $f_{kbal,i}$ should not itself be interpreted as a balancing weight on the raw covariates. The original weights \(w^0\) and \(w^1\) are chosen to target covariate balance, whereas \(f_{kbal,i}\) is a transformed aggregation weight used only to combine the two X-learner components. Consequently, the balancing properties of \(w^0\) and \(w^1\) do not directly translate into a covariate-balance condition involving \(f_{kbal,i}\).

\section{Monte Carlo Simulations }
\label{ch:KBAL_sim_study}
To study finite-sample properties of optimization-based weighting for CATE estimation, we consider three simulation designs adapted from \citet{wager_estimation_2018}, \citet{Damour_Franks_2012}, \citet{kunzel_metalearners_2019} and \citet{rcate_Li}.
Each design is defined by the propensity score function $ p(x) $ and the response functions $ \mu_0(x) $ and $ \mu_1(x) $. Throughout, unconfoundedness holds and the error terms are homoskedastic. For observation $i$, potential outcomes are generated as
\begin{align} 
    Y_i(1) = \mu_1(X_i) + \varepsilon_i(1), \\
    Y_i(0) = \mu_0(X_i) + \varepsilon_i(0),
\end{align}
with $\varepsilon_i(1),\varepsilon_i(0) \overset{iid}{\sim} \mathcal{N}(0, 1)$.  
Then, the treatment assignment and the observed outcome are set as follows
\begin{equation}
        D_i \sim \text{Bern}(p(X_i)), \quad Y_i ^{obs} = Y_i(D_i), 
\end{equation} 
yielding samples $ (X_i, D_i, Y_i ^{obs}) $. 

We consider nine different simulation settings obtained by combining three different CATE structures, one linear and  two nonlinear, with three overlap conditions (good, moderate and poor).  
Overlap is varied by the joint distribution of covariates and the propensity score specification. Specifically, in the good- and poor-overlap settings, the covariates follow a normal distribution, $X_{iK} \sim \mathcal{N}(0, 1),$ while in the moderate overlap setting, they are standard uniformly distributed, $X_i \sim \mathcal U([0,1]^K)$. For simplicity, the propensity score functions just depend on the first column vector of $X$, denoted as $x_1$.  
In the good overlap setting, the propensity score is defined as 
\begin{equation}\label{eq:full_overlap}
    p(X_i=x) = \frac{1}{4} [1+\beta_{2,4}(x_1)],
\end{equation}
where $ \beta_{2,4} $ is the cumulative distribution function of the beta distribution with shape parameters $(2, 4)$. To obtain moderate or poor overlap the propensity score is instead defined as
\begin{equation} \label{eq:overlap_violation}
    p(X_i=x) = \frac{1}{1+\exp(-1-2x_1)}.
\end{equation}
Under Equation \eqref{eq:overlap_violation}, a normal covariate distribution produces poor overlap, while a uniform covariate distribution yields moderate overlap. The
resulting propensity score distributions are shown in
Figures~\ref{fig:propensities_S1}-\ref{fig:propensities_S3}
later in this section. Note that in the poor overlap setting the treatment and control group are of equal size, whereas in the moderate setting the samples are unbalanced, with the treatment group being considerably larger than the control group.   This design allows us to investigate the performance of kernel balancing weights under varying degrees of overlap violation.

We generate $ MC = 1000 $ independent samples, where each sample consists of $ N^{\text{train}} = 1000 $ training observations and $ N^{\text{test}} = 1000 $ test observations. The precision of CATE estimation  is evaluated using Root Mean Squared Error (RMSE), Mean Absolute Prediction Error (MAE) and Mean Absolute Bias, which are defined in Appendix~\ref{app:Appendix_monte}.

\subsection{Simulation S1 (Linear CATE)}
For the first data generating process, we adopt a linear response structure, based on the complex linear model of
\citet{kunzel_metalearners_2019} with modified coefficient vectors $ \beta_0 $ and $ \beta_1 $ and varying propensity score. 
\begin{align}
    \mu_0(x) &= x^\top \beta_0, \quad \text{with } \beta_0 \sim \text{Unif}([1,10]^{10}), \\
    \mu_1(x) &= x^\top \beta_1, \quad \text{with } \beta_1 \sim \text{Unif}([1,10]^{10}),
\end{align}
so that $\tau(x) = x^\top(\beta_1-\beta_0)$ is the treatment effect. Figure~\ref{fig:propensities_S1} reports the propensity score distributions under the three overlap settings.
\begin{figure}[H]
\caption{Propensity score distributions for Simulation S1 under good, moderate, and poor overlap}
    \includegraphics[width=1\textwidth]{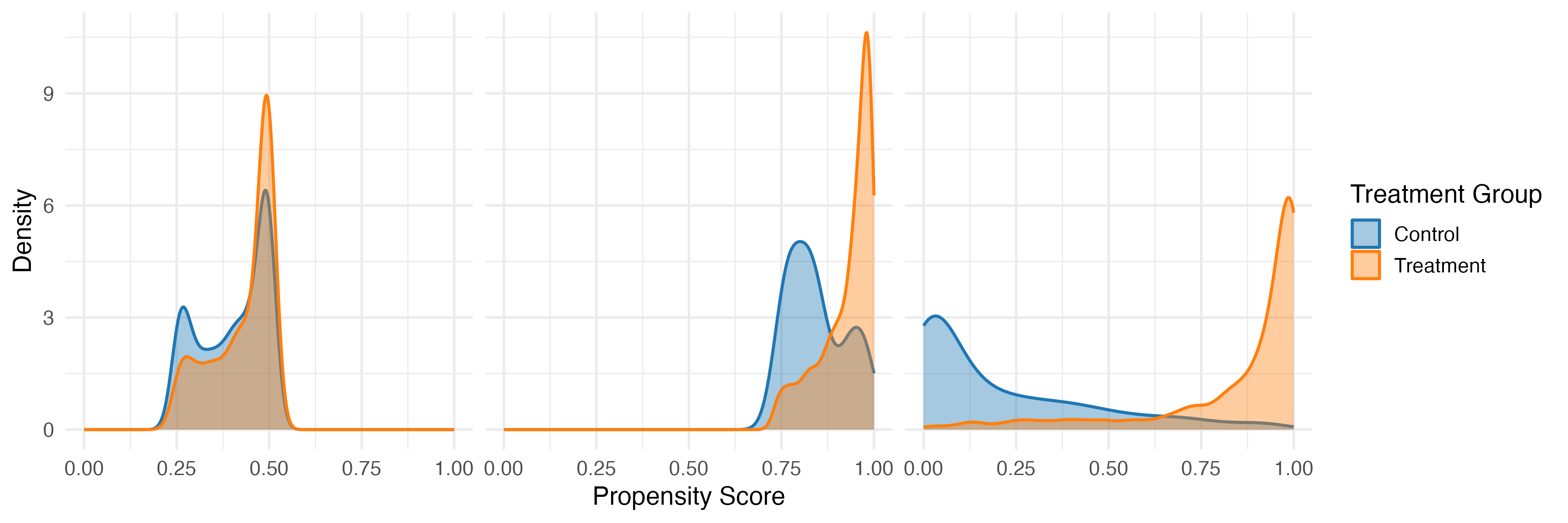} 
\caption*{\footnotesize\textit{Notes:} The panels show the distribution of the true propensity scores induced by the data-generating process. The left panel shows good overlap, the middle panel moderate overlap, and the right panel poor overlap.}
  \label{fig:propensities_S1}
\end{figure}
\subsection{Simulation S2 (Nonlinear CATE)}
This data setting implements a nonlinear treatment effect, which is based on \citet{lei_conformal_2021} and \citet{wager_estimation_2018}.
\begin{align*}
    \mu_0(x) &= 0,\\
    \mu_1(x) &=  f(x_1) \cdot f(x_2), \\\text{ with } f(x_j) &= \frac{2}{1+ \exp \{ -12(x_j-0.5) \} }, \quad j \in \{1,2\}.
\end{align*}
Again,  $x_1, x_2$ are the first and second column vector of $X$. Since $Y (0) \equiv 0$, this simplifies the setting and the true treatment effect is then $ \tau(x)=  \mu_1(X_i=x)$. 
The resulting propensity score distributions for this setting are illustrated in Figure~\ref{fig:propensities_S2}.
\begin{figure}[H]
\caption{Propensity score distributions for Simulation S2 under good, moderate, and poor overlap}    \includegraphics[width=1\textwidth]{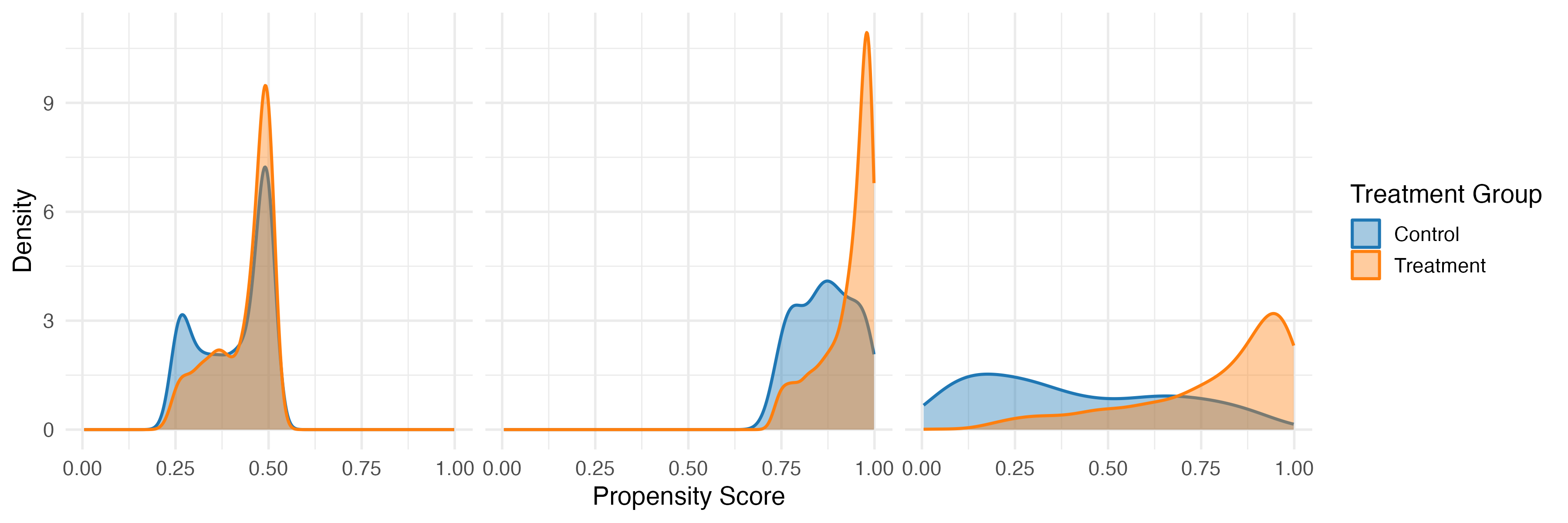}
\caption*{\footnotesize\textit{Notes:} The panels show the distribution of the true propensity scores induced by the data-generating process. The left panel shows good overlap, the middle panel moderate overlap, and the right panel poor overlap.}
  \label{fig:propensities_S2}
\end{figure}

\subsection{Simulation S3 (Complex nonlinear CATE)}
The third data setting features a treatment effect function that is nonlinear, which includes nonlinear covariate interaction terms. This design is based on \citet{rcate_Li}. While their original design includes contamination and outliers, we omit this aspect since it is not the focus of our study. The propensity score distributions for this setting are illustrated in Figure~\ref{fig:propensities_S3}.
\begin{align*}
\mu_1(x) &= \mu_0(x) + \tau(x), \quad \text{where} \\
\tau(x) &= 6\sin(2x_1)
  + 3(x_2+3)x_3
  + 9\tanh(0.5x_4)
  + 3x_5 \bigl(2\cdot\mathds{1}\{x_4>0\}-1\bigr) \\
&\quad + 3x_6 + 2x_7 + x_8 - 2x_9 - 4x_{10}, \\
\mu_0(x) &= 100 + 4x_1 + x_2 - 3x_3 - \tfrac{1}{2}\tau(x).
\end{align*}
Note that the appearance of $\tau(x)$ in the definition of $\mu_0(x)$ is part of the data-generating process. It does not change the treatment effect, since by definition
$$\mu_1(x)=\mu_0(x)+\tau(x),$$
and therefore
$$\mu_1(x)-\mu_0(x)=\tau(x).$$
\begin{figure}[H]
\caption{Propensity score distributions for Simulation S3 under good, moderate, and poor overlap}
    \includegraphics[width=1\textwidth]{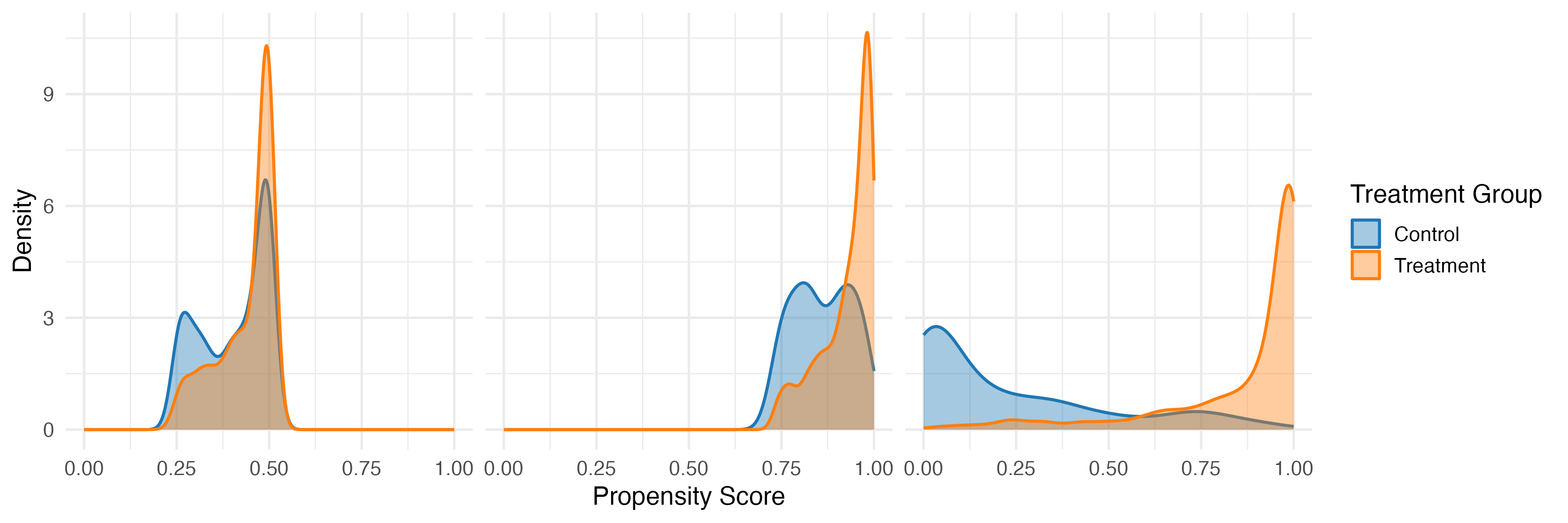} 
\caption*{\footnotesize\textit{Notes:} The panels show the distribution of the true propensity scores induced by the data-generating process. The left panel shows good overlap, the middle panel moderate overlap, and the right panel poor overlap.}
  \label{fig:propensities_S3}
\end{figure}
\subsection{Results}
\label{sec:results}
Table \ref{tab:S1_S2_S3_combined} reports our findings for the causal forest (CF), causal forest with trimmed inverse probability weighting (CF-IPW), X-learner with random forest base learners (XRF), and their kernel balancing variants (CF-KBal, XRF-KBal). The unweighted variants serve as benchmarks. We also report coverage and average interval width for the 95\% confidence intervals for completeness in Appendix~\ref{app:Appendix_monte}.

Under good overlap, estimation performance is comparable across methods in all three simulation settings. Differences in RMSE and MAE between weighted and unweighted estimators are small, and no method consistently outperforms the others. Bias differences are more noticeable for causal forests. In particular, for S1, where the CATE is linear, CF-IPW, and CF-KBal have slightly larger negative bias than CF. In S3 the pattern reverses, relative to CF, CF-IPW and CF-KBal reduces bias substantially by approximately 92\% and 90\%, respectively. As expected, when treated and control units are already sufficiently comparable over the covariate space, weighting only makes small adjustments and therefore has limited effect overall. As overlap deteriorates, the role of weighting becomes more important, but its effectiveness depends on the learner, the extent of the overlap violation, and the underlying CATE structure.

When overlap is moderate, performance differences become more pronounced and design-dependent. In S1, weighting does not change the benchmark performance significantly. Relative to CF, CF-IPW and CF-KBal slightly worsen performance, increasing RMSE by about 2.5\% and 1.5\%, from 10.611 to 10.874 and 10.768, respectively. 
Average bias also becomes slightly more negative, changing from -9.770 for CF to -10.055 for CF-IPW and -9.931 for CF-KBal. For the X-learner, XRF-KBal reduces RMSE by about 2.5\%, from 10.291 to 10.028, while the bias remains large, changing only slightly from -9.313 to -9.243. In contrast, in S2 and S3, where the treatment effect is nonlinear, kernel balancing improves the performance more clearly. In S2, CF-KBal reduces RMSE by around 16\%, from 1.307 to 1.093, while average bias decreases from -0.547 to -0.001. XRF-KBal also achieves large performance improvements, reducing RMSE by around 43\% and nearly halving bias. In S3, kernel balancing again substantially reduces bias, although RMSE improvements are more modest than in S2. Relative to CF, CF-KBal reduces RMSE from 3.243 to 3.126, corresponding to a decrease of around 4\%,  while average bias falls sharply from 1.429 to 0.040. For the X-learner, XRF-KBal reduces RMSE by 16\% and reduces the magnitude of the average bias by approximately 52\% relative to XRF.

Under poor overlap, kernel balancing continues to reduce bias in nonlinear settings. In S2, CF-KBal outperforms CF in terms of both RMSE and bias, with RMSE decreasing by about 16\% and the magnitude of the average bias reduced by roughly 74\% (from -0.240 to -0.063). Similarly, XRF-KBal reduces RMSE by 25\% and bias by 37\% (from -0.072 to -0.046). In S3, CF-KBal again yields a considerable bias reduction (from -1.607 to -0.038) with little change in RMSE, while XRF-KBal improves both RMSE and bias relative to XRF. In S1, the benefits of kernel balancing remain limited with no significant improvements. 

Furthermore, comparing CF and XRF, the results show that neither estimator class dominates uniformly. CF is not a metalearner, but \citet{kunzel_metalearners_2019} argue that it is similar to TRF and SRF (T-learner and S-learner with random forest as a base learner).  In the unweighted case, XRF often performs better than CF in terms of RMSE and MAE under moderate overlap, particularly in S1 and S3, while differences are smaller under good overlap. This is not surprising, as the X-learner is especially useful when the treatment and control groups differ in size, as they do in the moderate-overlap setting.  In the kernel-weighted case, XRF-KBal also tends to achieve lower RMSE and MAE than CF-KBal in several moderate- and poor-overlap settings, especially for nonlinear CATE designs. However, the ranking depends on the evaluation metric: CF-KBal often performs well in terms of bias reduction, even when XRF-KBal achieves lower prediction error. This is consistent with the broader conclusion of \citet{kunzel_metalearners_2019} that no single metalearner performs best across all settings.

Overall, kernel balancing affects CATE estimation through different mechanisms across methods. For the metalearners, performance improvements are primarily associated with bias reduction arising from propensity scores misspecification, which is consistent with the observation that under good overlap XRF and XRF-KBal perform similarly. In contrast, causal forests, which do not rely on propensity score weighting, benefit from kernel balancing through improved covariate balance in feature spaces that capture outcome-relevant nonlinearities and interactions, thereby attenuating residual imbalances even when overlap is good. 

\begin{table}[ht]
\centering
\caption{Monte Carlo results for CF and XRF estimators under varying overlap and treatment-effect complexity}
\label{tab:S1_S2_S3_combined}
\footnotesize
\setlength{\tabcolsep}{2.5pt}
\resizebox{\textwidth}{!}{%
\begin{tabular}{l l  r r r  r r r  r r r}
\\[-1.8ex]\hline
\hline \\[-1.8ex]
 &  & \multicolumn{9}{c}{Simulation Setting} \\
Overlap & Method
& \multicolumn{3}{c}{S1}
& \multicolumn{3}{c}{S2}
& \multicolumn{3}{c}{S3} \\
\cmidrule(lr){3-5} \cmidrule(lr){6-8} \cmidrule(lr){9-11}
 & 
& RMSE & MAE & Bias
& RMSE & MAE & Bias
& RMSE & MAE & Bias \\
\hline
\multirow{5}{*}{good}
& CF       
& \textbf{3.930} & \textbf{3.163} & \textbf{-1.015}
& 0.273 & 0.223 & 0.016
& 2.209 & 1.758 & 0.122 \\
& CF-IPW   
& 3.953 & 3.182 & -1.105
& \textbf{0.270} & \textbf{0.219} & 0.004
& \textbf{2.178} & \textbf{1.736} & \textbf{0.009} \\
& CF-KBal  
& 3.954 & 3.182 & -1.107
& 0.275 & 0.222 & \textbf{-0.003}
& 2.190 & 1.745 & 0.012 \\
& XRF      
& 3.941 & 3.173 & -0.917
& \textbf{0.273} & \textbf{0.219} & \textbf{0.023}
& 2.150 & 1.716 & 0.082 \\
& XRF-KBal 
& \textbf{3.934} & \textbf{3.168} & \textbf{-0.910}
& 0.275 & 0.221 & 0.023
& \textbf{2.143} & \textbf{1.711} & \textbf{0.073} \\
\midrule
\multirow{5}{*}{moderate}
& CF       
& \textbf{10.611} & \textbf{9.787} & \textbf{-9.770}
& 1.307 & 0.828 & -0.547
& 3.243 & 2.627 & -1.429 \\
& CF-IPW   
& 10.874 & 10.070 & -10.055
& 1.160 & \textbf{0.785} & -0.361
& 3.138 & 2.524 & -0.120 \\
& CF-KBal  
& 10.768 & 9.948 & -9.931
& \textbf{1.093} & 0.887 & \textbf{-0.001}
& \textbf{3.126} & \textbf{2.513} & \textbf{0.040} \\
& XRF      
& 10.291 & 9.353 & -9.313
& 1.062 & 0.734 & -0.224
& 2.980 & 2.392 & -0.985 \\
& XRF-KBal 
& \textbf{10.028} & \textbf{9.259} & \textbf{-9.243}
& \textbf{0.610} & \textbf{0.435} & \textbf{-0.118}
& \textbf{2.501} & \textbf{2.003} & \textbf{-0.450} \\
\midrule
\multirow{5}{*}{poor}
& CF       
& 10.274 & 8.198 & 0.870
& 1.026 & \textbf{0.444} & -0.240
& \textbf{8.399} & 6.616 & -1.607 \\
& CF-IPW   
& 11.030 & 8.846 & \textbf{-0.024}
& 0.881 & 0.450 & -0.100
& 8.731 & 6.865 & 0.115 \\
& CF-KBal  
& \textbf{10.261} & \textbf{8.190} & -0.142
& \textbf{0.863} & 0.460 & \textbf{-0.063}
& 8.428 & \textbf{6.612} & \textbf{-0.038} \\
& XRF      
& \textbf{12.214} & \textbf{9.767} & \textbf{6.027}
& 0.789 & 0.406 & -0.072
& 8.617 & 6.809 & -0.810 \\
& XRF-KBal 
& 12.302 & 9.857 & 6.374
& \textbf{0.593} & \textbf{0.319} & \textbf{-0.046}
& \textbf{8.242} & \textbf{6.336} & \textbf{0.474} \\
\hline
\end{tabular}%
}
\begin{flushleft}
\footnotesize
\textit{Notes:} Results are averaged over 1000 Monte Carlo replications. Best values are shown in bold relative to the unweighted benchmark within each method class.
\end{flushleft}
\end{table}


\section{Semi-Synthetic Application}
\label{ch:emp_app}

Evaluating causal inference models is challenging because counterfactual outcomes are unobservable in practice. This is often referred to as the  “fundamental problem of causal inference” \citep{holland_statistics_1986}. Although Assumptions \ref{SUTVA}-\ref{Linearity of Expected Outcome} can justify the estimation of causal effects, they do not provide observable ground-truth treatment effects at the unit level.
Consequently, observational datasets hardly ever provide a ground truth against which the performance of causal inference methods can be measured. Typically, there are two ways to address this issue. One is using real-world data from randomized controlled
trials (RCTs). However, the problem in using data from RCTs for our setting is that the treatment is received randomly. As a result, there is no imbalance between the treated and control distributions and no overlap violations, making weighting methods mostly redundant.
The other approach is to use semi-synthetic datasets. These use real covariates with simulated outcomes and have become a widely used tool\footnote{There is, however, an ongoing debate about the use and limitation of semi-synthetic datasets, which is beyond the scope of this work. For details, see \citet{curth2021doinggreatestimatingcate,poinsot2025positioncausalmachinelearning}.}  in causal machine learning \citep{Hill,dorie2016npci,dorie_2019_do_it_yourself, Wendling_2018, Knaus_2020}. 

\subsection{Data Preprocessing and Setting}
To evaluate the performance of our kernel balancing methods, we use data from the Infant Health and Development Program (IHDP) dataset \citep{IHDP}, which was initially designed based on a randomized controlled trial that started in 1985 to investigate the impact of home visits from doctors on cognitive outcomes in low-birth-weight, premature infants. The data set includes 25 variables with pre-treatment characteristics, such as child-specific metrics (e.g., birth weight, head circumference, weeks born preterm), pregnancy behaviors (e.g., maternal smoking, maternal alcohol use), and maternal demographic factors (e.g., age, marital status, education level, employment during pregnancy). \citet{Hill} used this data to construct a semi-synthetic version, where the outcomes are generated using a (non-)linear response surface while using the original covariates. The transformed IHDP dataset has become a benchmark for evaluating new causal machine-learning methods \citep{Johansson_Dragonet,alaa2017bayesianinference,alaa18a_limits,API_Lin_2019,cheng2022_counterfactual} because it allows us to evaluate the performance of our estimators based on the true treatment effects. Furthermore, to simulate observational data, \citet{Hill} introduced selection bias by removing a non-random subset of the treatment group (children of non-white mothers).  This means treatment assignment is no longer random in the transformed dataset, as maternal race becomes related to whether a unit remains in the treatment group. This adjustment created an unbalanced dataset with 139 treated and 608 control units. Since these covariates are also used to generate the simulated potential outcomes and are therefore predictive of the outcome, a simple difference-in-means estimator would be biased. 

For our analysis, we use two variants of the transformed IHDP dataset. The first is the simulated IHDP dataset provided by \citet{Johansson_Dragonet}, who made the dataset publicly available at \url{https://www.fredjo.com/}. This dataset consists of 1000 repetitions generated based on setting A from the NPCI \texttt{R} package \citep{dorie2016npci}. The outcomes are simulated according to
\[
Y(0) \sim \mathcal{N}\left(\exp((X + A) \beta_A), 1\right)
 \quad \text{and} \quad Y(1) \sim \mathcal{N}\left(X\beta_A - \omega_A, 1\right),
\]
where \( X \) is a matrix of standardized covariate values, with the first column being a vector of ones,  \( A \) is a matrix of the same dimensions as \( X \), with all elements set to 0.5. The coefficients in \( \beta_A \) are randomly drawn from the set \(\{0, 0.1, 0.2, 0.3, 0.4\}\) with probabilities \(\{0.6, 0.1, 0.1, 0.1, 0.1\}\). Finally, \( \omega_A \) is a shift parameter and used to normalize the average treatment effect in the target population.  In our analysis, $\omega_A$ is already fixed in the simulated outcomes and is not modified.

For the second variant, we use setting B from the NPCI \texttt{R} package \citep{dorie2016npci}, where the outcomes are simulated according to 
\[
Y(0) \sim \mathcal{N}(X^*\beta_B, 1) \quad \text{and} \quad Y(1) \sim \mathcal{N}(X^*\beta_B+4, 1),
\]

where $X^*$ is a matrix of standardized covariate values that also includes quadratic and interaction terms. Furthermore, $\beta_B$ is defined as \( \beta_B= (\beta_1,\beta_2) \), where $ \beta_1 $ is randomly drawn from the set $\{0, 1, 2\}$ with probabilities $\{0.6, 0.3, 0.1\}$, and $ \beta_2 $ is randomly drawn from the set $\{0, 0.5, 1\}$ with probabilities $\{0.8, 0.15, 0.05\}$. The resulting coefficient vector is then applied to all observations in that repetition.

While setting A uses a nonlinear response surface, setting B  remains linear in its parameters. We use a $50\% / 50 \%$ train-test split for our analysis and the overlap assumption is moderate, as indicated in Figure \ref{fig:propensities_IHDP} in Appendix \ref{Appendix_figures}, resulting in a similar setting as in our simulation study in Section \ref{ch:KBAL_sim_study}. We average over 1000 repetitions and report the aggregated statistics, which we have already used in Section \ref{sec:results}. 

\subsection{Results}
\label{ch:emp_app_results}
Table \ref{tab:results_ihdp_combined} presents the results for Settings A and B. As before, the unweighted estimators serve as benchmarks. In Setting A, kernel balancing generally improves estimation accuracy, but the improvement varies across methods. For causal forests, CF-KBal achieves small reductions in MAE and bias, while RMSE remains close to that of the CF estimator. The effects are more pronounced for the X-learner. XRF-KBal exhibits noticeably lower RMSE (about 10\%) and MAE (about 12\%), and the average bias is close to zero compared to XRF.

In Setting B, the differences among estimators are less pronounced. For causal forests, RMSE and MAE are comparable across CF, CF-IPW, and CF-KBal, with CF exhibiting the smallest bias. Similarly, XRF-KBal achieves only minor reductions in RMSE and MAE relative to XRF, and the bias remains essentially unchanged. These results suggest that kernel balancing has a limited effect in Setting B, despite having moderate overlap by construction. The observed performance patterns indicate that overlap is sufficient, as evidenced by the absence of systematic bias reduction in the metalearner and the slight increase in bias for the causal forest variants. This behavior mirrors the pattern observed in Setting S1 of the simulation study under good overlap.

Comparing CF and XRF, the IHDP results again show that neither estimator class dominates uniformly across all criteria, but the XRF and its variant perform better according to more criteria. 
In both settings, XRF achieves lower RMSE and MAE than CF, which is again consistent with the fact that the X-learner is designed to perform well when the treatment and control groups differ in size. However, the bias comparison is more mixed: in setting A, XRF-KBal has the smallest bias overall, whereas in setting B the unweighted CF has the smallest bias. 
 
Overall, the IHDP results mirror the simulation findings, indicating that kernel balancing primarily improves performance when residual imbalance or propensity-score misspecification affects estimation, but offers limited gains when effective overlap is sufficient.

\FloatBarrier
\begin{table}[ht]
\centering
\caption{Semi-synthetic IHDP results for Settings A and B}
\label{tab:results_ihdp_combined}
\footnotesize
\setlength{\tabcolsep}{3pt}
\resizebox{\textwidth}{!}{
\begin{tabular}{l l r r r r r r r r r r}
\\[-1.8ex]\hline
\hline \\[-1.8ex]
 &  & \multicolumn{10}{c}{IHDP Setting} \\
Method 
& 
& \multicolumn{5}{c}{A}
& \multicolumn{5}{c}{B} \\
\cmidrule(lr){3-7} \cmidrule(lr){8-12}
 & 
& RMSE & MAE & Bias & Coverage & Width
& RMSE & MAE & Bias & Coverage & Width \\
\hline

CF 
& 
& 4.741 & 3.459 & -0.346 & 0.367 & 3.327
& 4.051 & 3.073 & \textbf{0.012} & 0.336 & 3.190 \\

CF-IPW
& 
& \textbf{4.668} & 3.349 & -0.127 & 0.398 & 3.336
& \textbf{4.038} & \textbf{3.072} & 0.073 & 0.382 & 3.686 \\

CF-KBal
& 
& 4.702 & \textbf{3.343} & \textbf{0.051} & \textbf{0.445} & 3.899
& 4.036 & 3.077 & 0.079 & \textbf{0.415} & 4.071 \\

XRF
& 
& 4.165 & 2.911 & -0.042 & 0.488 & 3.599
& 3.702 & 2.816 & \textbf{0.253} & 0.412 & 3.723 \\

XRF-KBal
& 
& \textbf{3.736} & \textbf{2.547} & \textbf{0.004} & \textbf{0.550} & 3.500
& \textbf{3.583} & \textbf{2.724} & 0.258 & 0.422 & 3.687 \\

\hline
\end{tabular}
}

\begin{notes}
 Results are averaged over 1000 IHDP replications. RMSE, MAE, and Bias are computed with respect to the true individual treatment effects. Coverage denotes empirical coverage of nominal 95\% intervals, and Width denotes average interval length. Bold values indicate the best performance relative to the unweighted benchmark within each method class.
\end{notes}
\end{table}
\FloatBarrier

\section{Conclusion and Outlook}
Conventional weighting approaches in observational studies typically rely on estimated propensity scores, either directly through inverse probability weighting, trimming, and matching, or indirectly in doubly robust estimators. While propensity-score methods are popular because they are simple and easy to implement, they can become unstable when overlap is limited. In such settings, extreme weights may arise and lead to unreliable treatment-effect estimates. Focusing on tree-based methods, we study the role of kernel balancing in CATE estimation and how its benefits vary with overlap conditions and treatment effect complexity.

For metalearners, we implemented kernel balancing through a modified aggregation weight in the X-learner, replacing the standard propensity-score-based aggregation function. For causal forests, we incorporated kernel balancing through sample weights that reweighted treatment and control units into a common pseudo-population. This adjusted the covariate distribution and addressing bias from covariate imbalance.

Our Monte Carlo results showed that kernel balancing improves estimation accuracy in settings where treatment effects are smooth and nonlinear, particularly under covariate imbalance or overlap violation. This finding is consistent with the argument of \citet{hazlett2016kernel}, who show that when outcomes depend on nonlinear functions of the covariates, conventional matching and weighting estimators can exhibit substantial bias even when raw covariates are well balanced. In contrast, when overlap is good, propensity-score-based methods already induced approximate balance in the outcome-relevant covariate functions, so that kernel balancing induces a similar pseudo-population and achieves comparable bias and estimation accuracy. 

Our empirical application to the Infant Health and Development Program data supports these findings. Performance gains from kernel balancing mirror the observed patterns in the simulation study, with clearer improvements in settings where residual imbalance or model misspecification plays a more prominent role, and more limited gains when overlap is effectively sufficient.

There are several directions for further research. While this study focused on kernel balancing as a special case of an optimization program, future work could examine either alternative optimization-based weighting methods or kernel balancing with different objective functions. In particular, in high-dimensional settings kernel balancing is computationally infeasible. To reduce runtime, this study relied on a linear kernel approximation, but it remains an open question how this approximation affects covariate balance, treatment-effect estimation, and the resulting performance gains. Future research could therefore investigate the trade-offs introduced by kernel approximations and explore tuning or approximation strategies that improve computational efficiency without sacrificing estimation accuracy.  For example, limiting the number of constraints for which exact balance is enforced has been suggested by \citet{cousineau_estimating_2022}. Related ideas are formally studied in the robust optimization literature, which provides guarantees for optimization under reduced or relaxed constraint sets \citep{Robust_Opimization_Minimal}.

Finally, it would be interesting to investigate how recently proposed outcome-guided kernel balancing methods, such as forest kernel balancing \citep{Forest_Kernel_Balancing_Weights}, could be integrated into the weighted X-learner and causal forest frameworks considered here, and to what extent they affect estimation accuracy and computational performance.

\acks{I would like to thank Christoph Hanck for valuable feedback on earlier drafts of this manuscript and I am grateful to seminar participants at the RuhrMetrics Research Seminar and the EuroCIM 2024 for helpful comments and discussions. I also acknowledge partial financial support from TRR 391 Spatio-temporal Statistics for the Transition of Energy and Transport (520388526) by the Deutsche Forschungsgemeinschaft (DFG, German Research Foundation) and from the Rhine-Ruhr Center for Scientific Data Literacy (DKZ.2R) by the German Federal Ministry of Education and Research (BMBF).  During the preparation of this manuscript, I used OpenAI’s large language models (up to and including GPT-5.5) to assist with proofreading and language editing. I reviewed and verified all generated content and am solely responsible for the accuracy of the final manuscript and any remaining errors.
}

\bibliography{overall_citation}
\appendix

\section{Additional Monte Carlo Results}   \label{app:Appendix_monte}
\subsection{Performance Criteria for Simulation Study}
Let \(M\) denote the number of Monte Carlo replications and recall that \(N_{\text{test}}\) is the number of test observations in each replication. The precision of CATE estimation is evaluated using root mean squared error (RMSE), mean absolute error (MAE), and average bias:
\begin{align*}
\overline{\mathrm{RMSE}}
&=
\frac{1}{M}
\sum_{mc=1}^{M}
\left[
\sqrt{
\frac{1}{N_{\text{test}}}
\sum_{i=1}^{N_{\text{test}}}
\left(
\tau(X_{i,mc})-\hat{\tau}(X_{i,mc})
\right)^2
}
\right],
\\
\overline{\mathrm{MAE}}
&=
\frac{1}{M}
\sum_{mc=1}^{M}
\left[
\frac{1}{N_{\text{test}}}
\sum_{i=1}^{N_{\text{test}}}
\left|
\tau(X_{i,mc})-\hat{\tau}(X_{i,mc})
\right|
\right],
\\
\overline{\mathrm{Bias}}
&=
\frac{1}{M}
\sum_{mc=1}^{M}
\left[
\frac{1}{N_{\text{test}}}
\sum_{i=1}^{N_{\text{test}}}
\left(
\hat{\tau}(X_{i,mc})-\tau(X_{i,mc})
\right)
\right].
\end{align*}

We additionally report pointwise confidence interval coverage and the width of the confidence interval. For observation $i$ in Monte Carlo replication $mc$, the nominal 95\% confidence interval is defined as
\[\mathrm{CI}_{i,mc}
=
\left[
\hat{\tau}(X_{i,mc}) - 1.96\,\widehat{\mathrm{se}}\{\hat{\tau}(X_{i,mc})\},
\;
\hat{\tau}(X_{i,mc}) + 1.96\,\widehat{\mathrm{se}}\{\hat{\tau}(X_{i,mc})\}
\right],\]
where the standard error is obtained from the built-in variance estimator in the \texttt{grf} package for causal forests and from the stratified nonparametric bootstrap for metalearners, as described in more detail below. Coverage and width are then defined as
\begin{align*}
\overline{\mathrm{Coverage}}
&=
\frac{1}{M}
\sum_{mc=1}^{M}
\left[
\frac{1}{N_{\text{test}}}
\sum_{i=1}^{N_{\text{test}}}
\mathds{1}
\left\{
\tau(X_{i,mc}) \in \mathrm{CI}_{i,mc}
\right\}
\right],
\\
\overline{\mathrm{Width}}
&=
\frac{1}{M}
\sum_{mc=1}^{M}
\left[
\frac{1}{N_{\text{test}}}
\sum_{i=1}^{N_{\text{test}}}
\left(
\mathrm{CI}^{\mathrm{upper}}_{i,mc}
-
\mathrm{CI}^{\mathrm{lower}}_{i,mc}
\right)
\right].
\end{align*}

\begin{table}[ht]
\centering
\caption{Coverage and interval width for CF and XRF estimators under varying overlap and treatment-effect complexity}
\label{tab:S1_S2_S3_coverage_width}
\footnotesize
\setlength{\tabcolsep}{2.5pt}
\begin{tabular}{l l  r r  r r  r r}
\\[-1.8ex]\hline
\hline \\[-1.8ex]
 &  & \multicolumn{6}{c}{Simulation Setting} \\
Overlap & Method
& \multicolumn{2}{c}{S1}
& \multicolumn{2}{c}{S2}
& \multicolumn{2}{c}{S3} \\
\cmidrule(lr){3-4} \cmidrule(lr){5-6} \cmidrule(lr){7-8}
 & 
& Coverage & Width
& Coverage & Width
& Coverage & Width \\
\hline

\multirow{5}{*}{good}
& CF       
& 0.219 & 2.254
& 0.612 & 0.545
& 0.387 & 2.226 \\
& CF-IPW   
& 0.220 & 2.271
& \textbf{0.721} & 0.653
& \textbf{0.390} & 2.214 \\
& CF-KBal  
& \textbf{0.220} & 2.277
& 0.619 & 0.541
& 0.390 & 2.225 \\
& XRF      
& \textbf{0.336} & 3.528
& 0.759 & 0.685
& \textbf{0.338} & 1.884 \\
& XRF-KBal 
& 0.311 & 3.246
& \textbf{0.778} & 0.710
& 0.326 & 1.813 \\
\midrule

\multirow{5}{*}{moderate}
& CF       
& 0.015 & 2.160
& 0.358 & 0.540
& 0.325 & 2.811 \\
& CF-IPW   
& 0.020 & 3.182
& \textbf{0.382} & 0.728
& 0.385 & 3.240 \\
& CF-KBal  
& \textbf{0.024} & 3.336
& 0.252 & 1.062
& \textbf{0.414} & 3.542 \\
& XRF      
& \textbf{0.033} & 3.026
& 0.520 & 1.064
& \textbf{0.395} & 3.125 \\
& XRF-KBal 
& 0.025 & 3.290
& \textbf{0.561} & 0.708
& 0.363 & 2.388 \\
\midrule

\multirow{5}{*}{poor}
& CF       
& 0.316 & 8.563
& 0.645 & 0.392
& 0.336 & 7.255 \\
& CF-IPW   
& \textbf{0.447} & 14.240
& \textbf{0.668} & 0.609
& \textbf{0.385} & 8.921 \\
& CF-KBal  
& 0.380 & 10.532
& 0.548 & 0.539
& 0.366 & 7.914 \\
& XRF      
& \textbf{0.335} & 10.879
& 0.783 & 0.781
& \textbf{0.287} & 6.304 \\
& XRF-KBal 
& 0.300 & 9.836
& \textbf{0.799} & 0.677
& 0.271 & 5.441 \\
\hline
\end{tabular}

\begin{notes}
 Results are averaged over 1000 Monte Carlo replications. Coverage denotes the share of 95\% confidence intervals that contain the true treatment effect. Width denotes the average interval length. Width should be interpreted jointly with coverage, since narrower intervals are only desirable when empirical coverage is adequate.
\end{notes}

\end{table}

\subsection{Coverage}
For causal forests, we rely on the built-in variance estimation procedure implemented in the \texttt{grf} package, which is based on the Bootstrap of Little Bags (BLB) method introduced by \citet{athey_generalized_2019}. The idea is to approximate the sampling variance of $\hat{\tau}(x)$ by exploiting the ensemble structure of the forest. The trees are grown on different subsamples of the data and grouped into smaller collections (“little bags”). Let $\hat{\tau}_{g,b}(x)$ denote the CATE prediction of
tree $b$ in group $g$, where each group contains $\ell$ trees. The group-level average is defined as 
\[
\bar{\tau}_g(x) = \frac{1}{\ell} \sum_{b=1}^{\ell} \hat{\tau}_{g,b}(x) \qquad g = 1,\dots,G.
\]
Trees within a group share the same subsample and are therefore dependent, while trees from different groups are approximately independent. The group-level mean predictions can then be written as
\begin{align*}
    \bar{\tau}(x) = \frac{1}{G} \sum_{g=1}^{G} \bar{\tau}_g(x).
\end{align*}
Under honesty, discussed in Section~\ref{sec:Weighted_CF}, and regularity conditions, variation across the group means
$\bar{\tau}_g(x)$ consistently estimates the sampling variability of the forest
estimator.  The resulting variance estimator can be written as
\[
\widehat{\Var}(\hat{\tau}(x)) \approx \widehat{V}(x)^{-2} \, \widehat{H}_n(x),
\]
where $\widehat{H}_n(x)$ captures the variability across little bags and
$\widehat{V}(x)$ is a local curvature term reflecting treatment variation.
The \texttt{grf} package \citep{grf_package} computes this quantity internally by setting \texttt{estimate.variance = TRUE} in prediction and provides pointwise confidence intervals based on a Gaussian approximation. Although these confidence intervals are constructed using the estimated standard errors defined above, the variance estimators implemented in \texttt{grf} primarily capture sampling variability and do not adjust for finite-sample bias \citep{wager_estimation_2018}. Coverage can therefore deteriorate when the estimator becomes bias-dominated rather than variance-dominated, for example in designs with strong nonlinear treatment effects, limited overlap, or model misspecification \citep{Comparing_Meta_Learners}.

For metalearners, we follow \citet{kunzel_metalearners_2019} and use stratified nonparametric bootstrap intervals with a normal approximation. In each bootstrap replication $r=1,\dots,R$ treatment and control observations are resampled with replacement according to the original group sizes, and the learner is refitted. We calculate pointwise standard errors as the empirical standard deviation across bootstrap replications, and form confidence intervals using a normal approximation (see Algorithm~ \ref{algo:bootstrap_meta} for details).
We use $R=50$ bootstrap replications. While this number is relatively small, we try to keep the runtime manageable, since inference is not the focus of this study. In additional checks with larger $R$, the resulting coverage patterns were similar. Table~\ref{tab:S1_S2_S3_combined} reports nominal 95\% pointwise confidence interval coverage for $\hat{\tau}(x)$. Coverage falls below nominal levels for both causal forests and metalearners in all our simulation settings, although there are differences between estimator classes.

Under good overlap, both CF-IPW and CF-KBal show only minor changes in coverage relative to CF in S1 and S3. In S2, CF-IPW produces the largest increase, raising coverage by approximately 17.8\% compared to CF. For the X-learner, coverage is generally higher than for causal forests, but the effect of weighting varies across settings. In S1 and S3, weighting slightly reduces coverage relative to XRF, by about 7.4\% and 3.6\%, respectively, whereas in S2, XRF-KBal yields a modest increase of approximately 2.5\%.

When overlap is moderate, all methods show very poor coverage in S1, and weighting does not substantially improve this outcome. Although interval widths increase for the weighted estimators, particularly for CF-IPW and CF-KBal, these wider intervals still fail to improve coverage. In S2 and S3, the pattern is more mixed. For causal forests, weighting generally increases interval width, but the effect varies by setting. In S2, CF-IPW increases coverage by about 7\%, whereas CF-KBal decreases coverage by approximately 30\% while nearly doubling the interval width. In S3, both weighted causal forest estimators improve coverage relative to CF, with gains of about 18\% for CF-IPW and 27\% for CF-KBal, and width increases of about 15\% and 26\%, respectively. For the X-learner, XRF-KBal performs particularly well in S2, increasing coverage by about 8\% while reducing interval width by roughly 33\%.

Under poor overlap, both weighted causal forest estimators improve coverage relative to CF in S1, but only at the expense of substantially wider intervals. CF-IPW increases coverage by about 41\% and width by roughly 66\%, while CF-KBal raises coverage by about 20\% with a smaller width increase of around 23\%. In S2, the pattern is again mixed: CF-IPW slightly improves coverage by about 4\%, but with a width increase of approximately 55\%, whereas CF-KBal reduces coverage by about 15\% while still increasing width by roughly 38\%. For the X-learner, XRF-KBal performs better, increasing coverage by about 2\% while reducing the interval width by approximately 13\%. In S3, both weighted causal forest estimators again improve coverage, but only with wider intervals. 

In general, providing formal confidence intervals is more challenging for metalearners than for causal forests, as these methods lack a unified asymptotic theory \citep{kunzel_metalearners_2019,Comparing_Meta_Learners}. Bootstrap-based confidence intervals are therefore common, but they typically fail to adjust for finite-sample bias in $\hat{\tau}(x)$ and tend to undercover. Moreover, in settings with limited or violated overlap, \citet{kunzel_metalearners_2019} document that confidence intervals may even become tighter and coverage may deteriorate, as tree-based learners extrapolate into regions of the covariate space with little or no support.

In our simulations, kernel balancing reduces average bias in settings S2 and S3 and, in several cases, also improves coverage relative to the corresponding unweighted estimators. However, coverage is still far below the nominal 95\% level. An intuition for this is that a small average bias does not rule out substantial bias at individual covariate values. Bias may cancel out on average, even if $\hat{\tau}(x)$ is systematically too high or too low, and the confidence intervals in that region will be centered incorrectly. Since coverage is evaluated pointwise, this kind of local misspecification can lead to undercoverage. The comparatively large MAE values are consistent with this explanation.
\subsection{Additional Figures and Tables}
\label{Appendix_figures}
\begin{figure}[!htbp]
\centering
\caption{Covariate Balance Before and After Weighting}
\label{fig:balance_3x3}
\begin{minipage}[t]{0.45\textwidth}
    \centering
    \includegraphics[width=\linewidth]{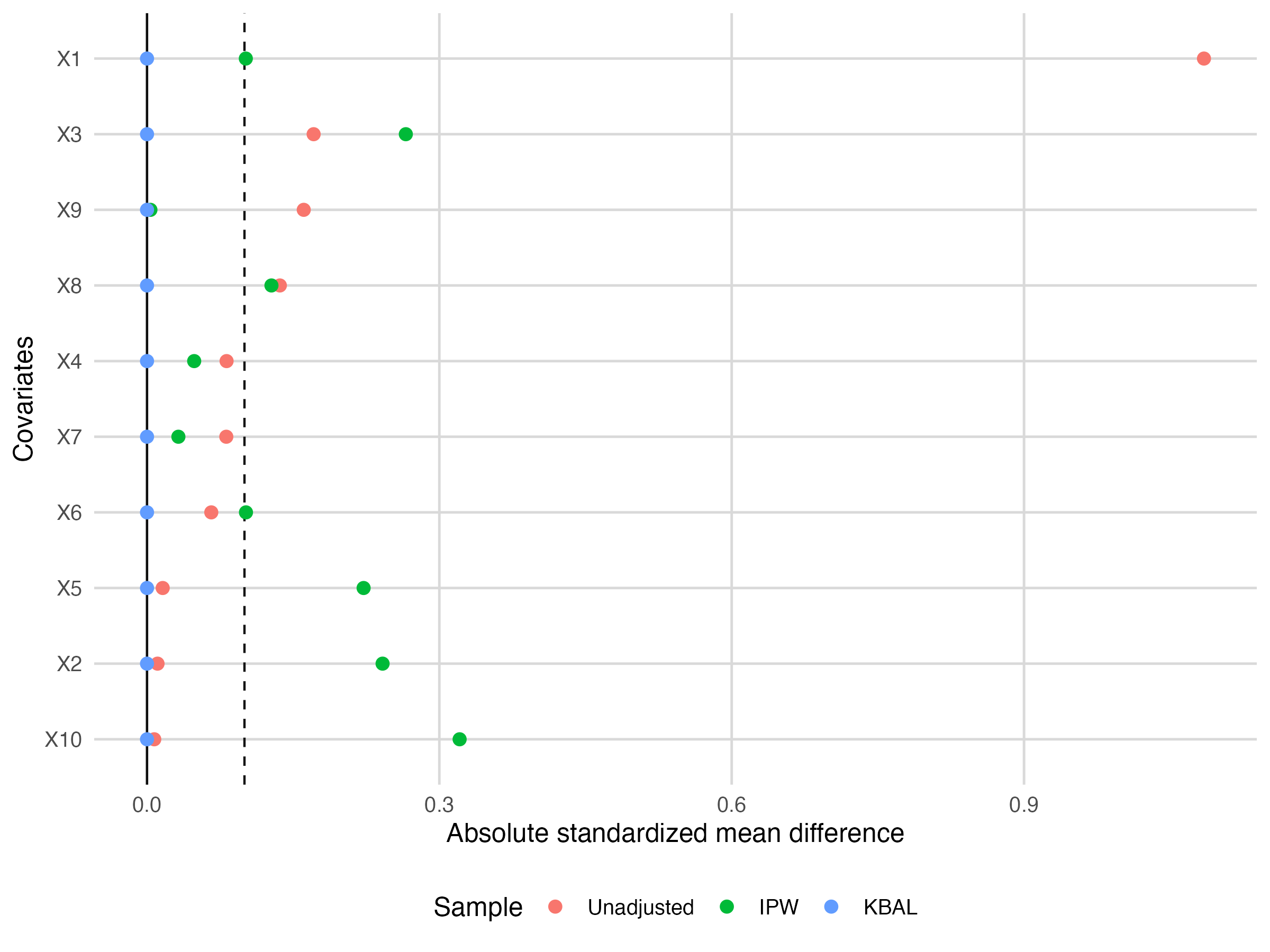}

    \smallskip
    (a) S1: Moderate overlap
\end{minipage}\hfill
\begin{minipage}[t]{0.45\textwidth}
    \centering
    \includegraphics[width=\linewidth]{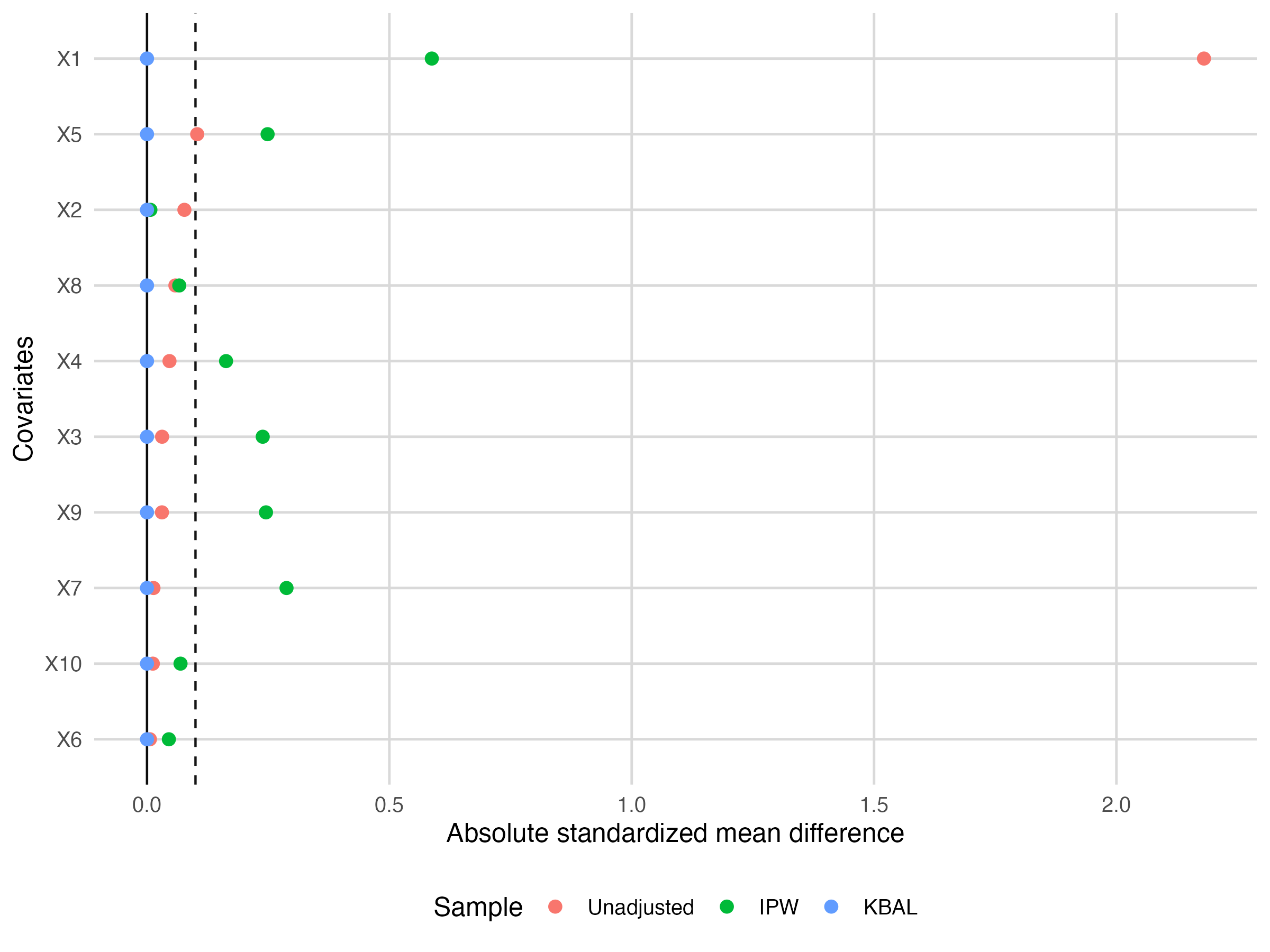}
    \smallskip
    (b) S1: Poor overlap
\end{minipage}
\vspace{0.8em}
\begin{minipage}[t]{0.45\textwidth}
    \centering
    \includegraphics[width=\linewidth]{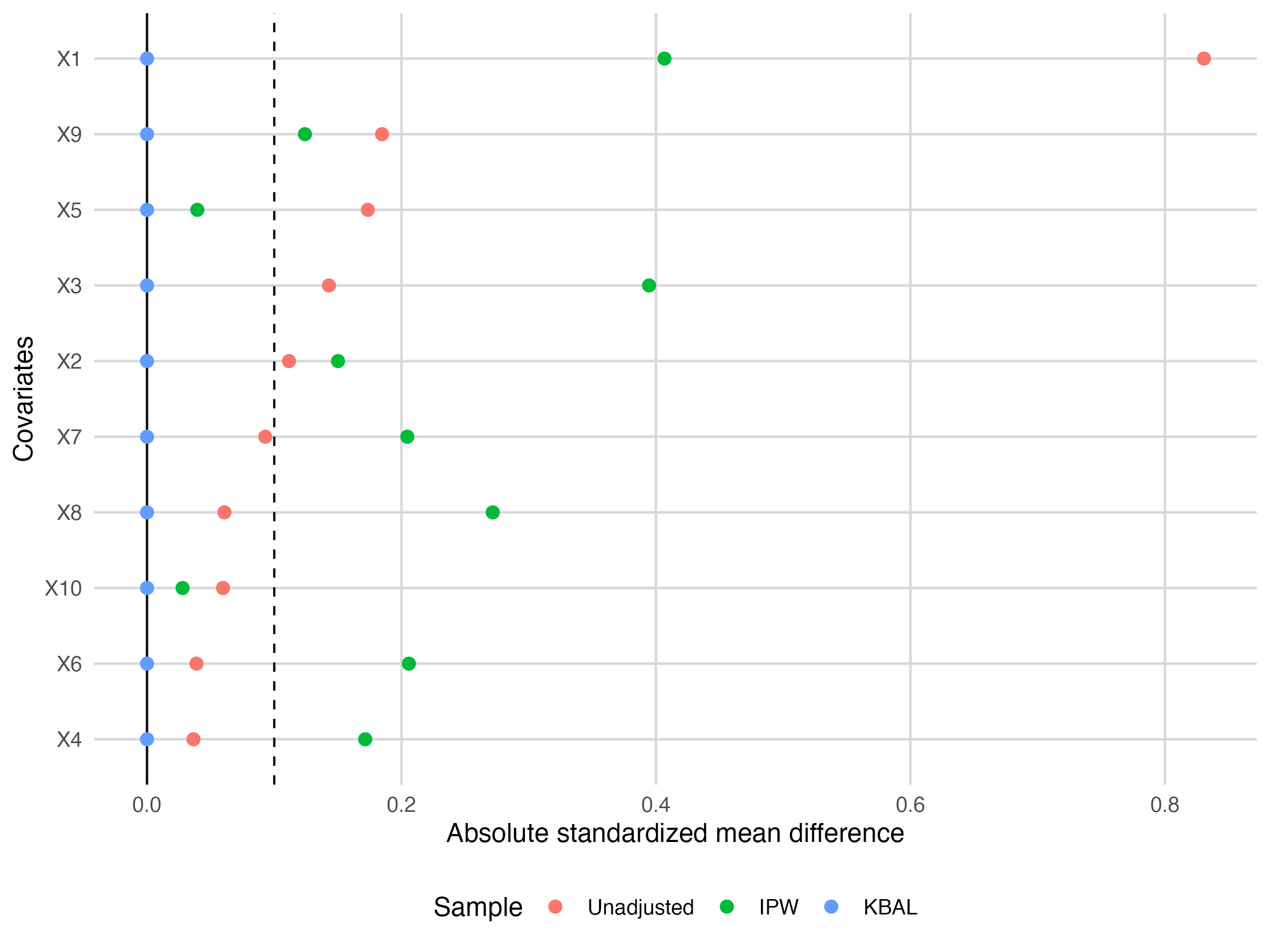}
    \smallskip
    (c) S2: Moderate overlap
\end{minipage}\hfill
\begin{minipage}[t]{0.45\textwidth}
    \centering
    \includegraphics[width=\linewidth]{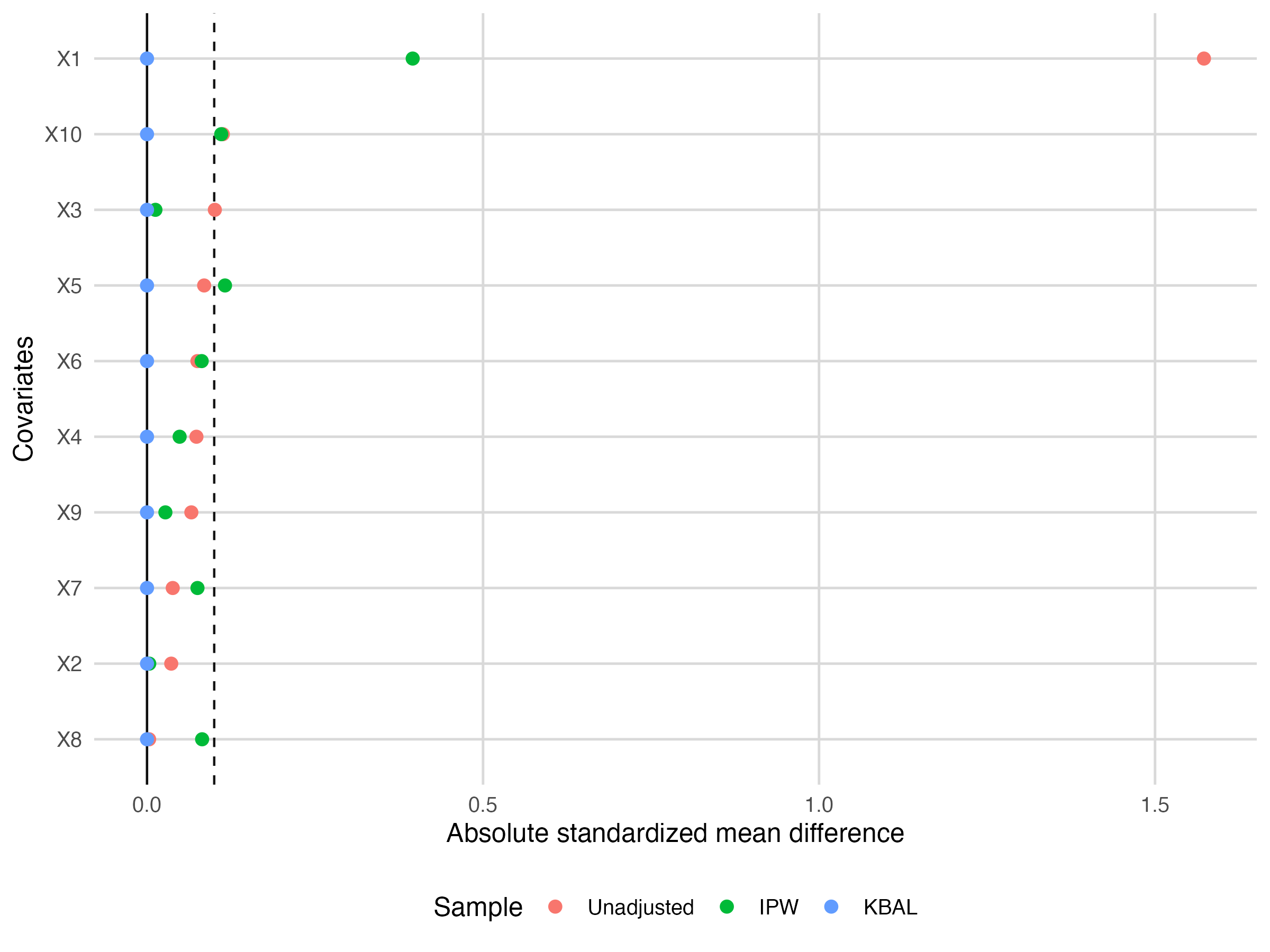}

    \smallskip
    (d) S2: Poor overlap
\end{minipage}

\vspace{0.8em}

\begin{minipage}[t]{0.45\textwidth}
    \centering
    \includegraphics[width=\linewidth]{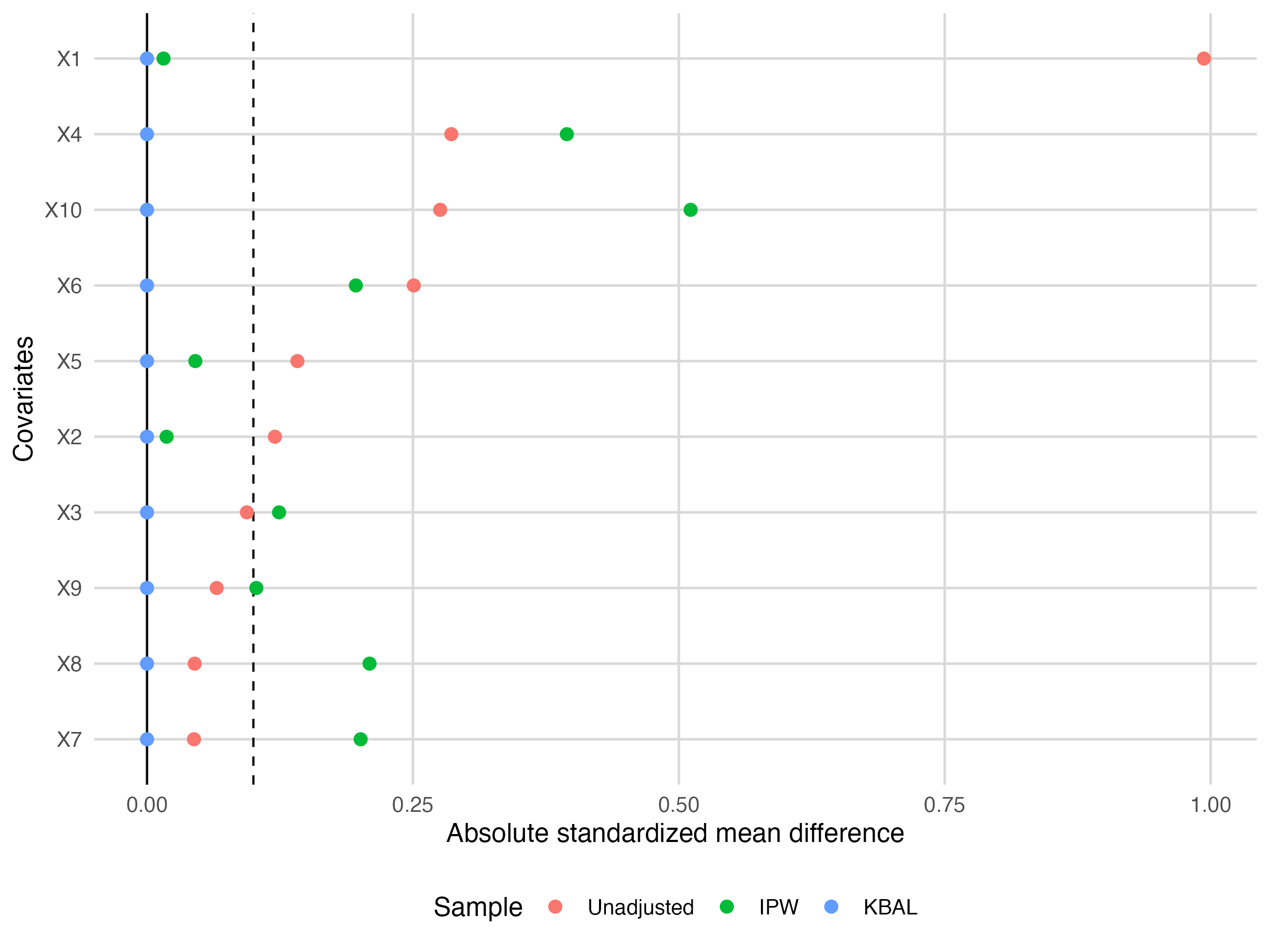}

    \smallskip
   (e) S3: Moderate overlap
\end{minipage}\hfill
\begin{minipage}[t]{0.45\textwidth}
    \centering
    \includegraphics[width=\linewidth]{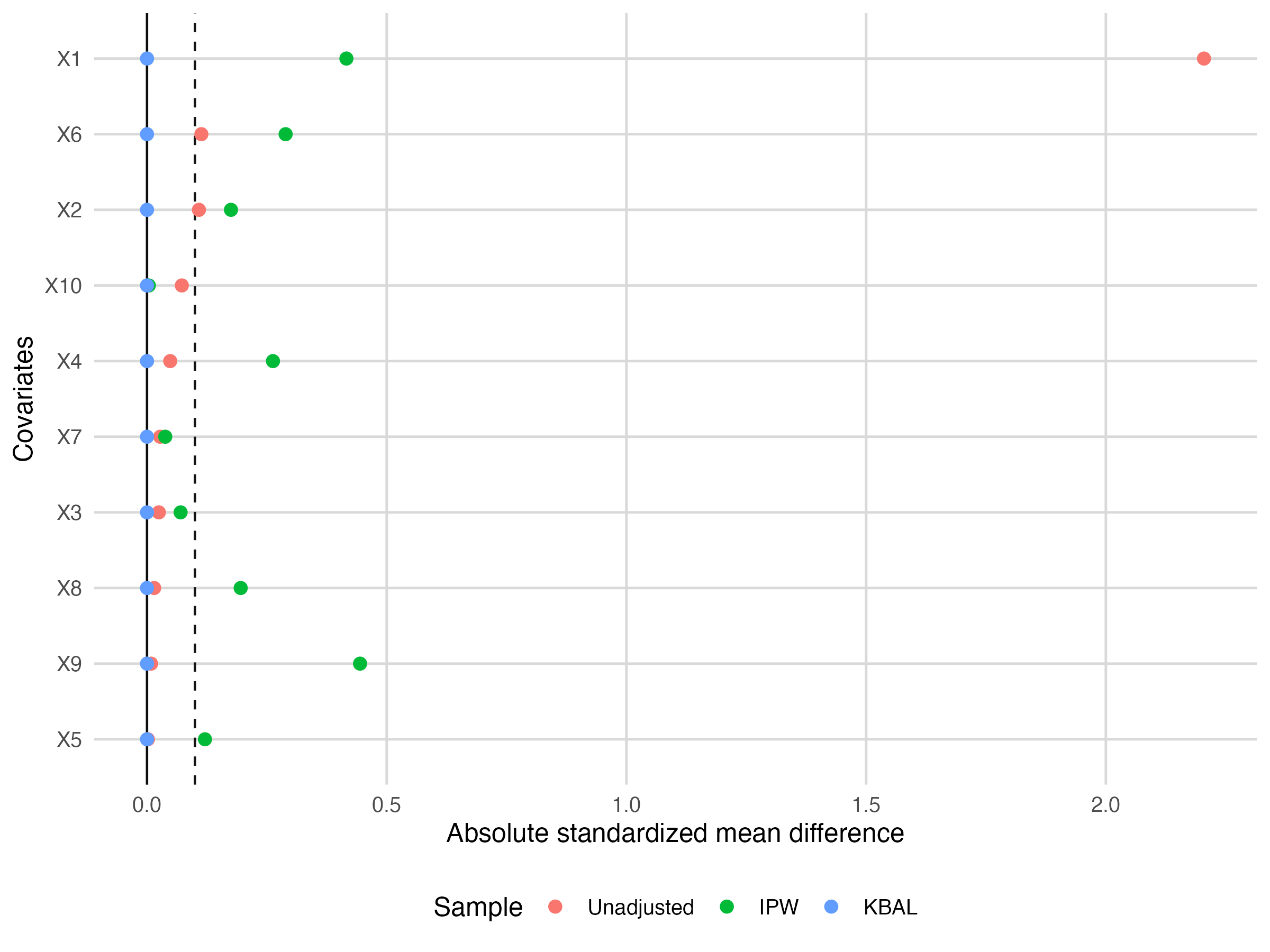}

    \smallskip
    (f) S3: Poor overlap
\end{minipage}

\vspace{0.5em}

\begin{notes}
 Each panel shows the absolute standardized mean differences for covariates before and after applying IPW and KBAL weights. The dashed line marks the balance threshold $|\mathrm{SMD}|=0.1$. KBAL achieves almost exact mean balance, while IPW often results in imbalance, particularly when covariate overlap is moderate. The weights used here correspond to those used in the weighted CF.
\end{notes}
\end{figure}

\FloatBarrier

\begin{figure}[!htbp]
\centering
\caption{Covariate Balance Before and After Weighting}
\label{fig:balance_1x2}
\begin{minipage}[t]{0.48\textwidth}
    \centering
    \includegraphics[width=\linewidth]{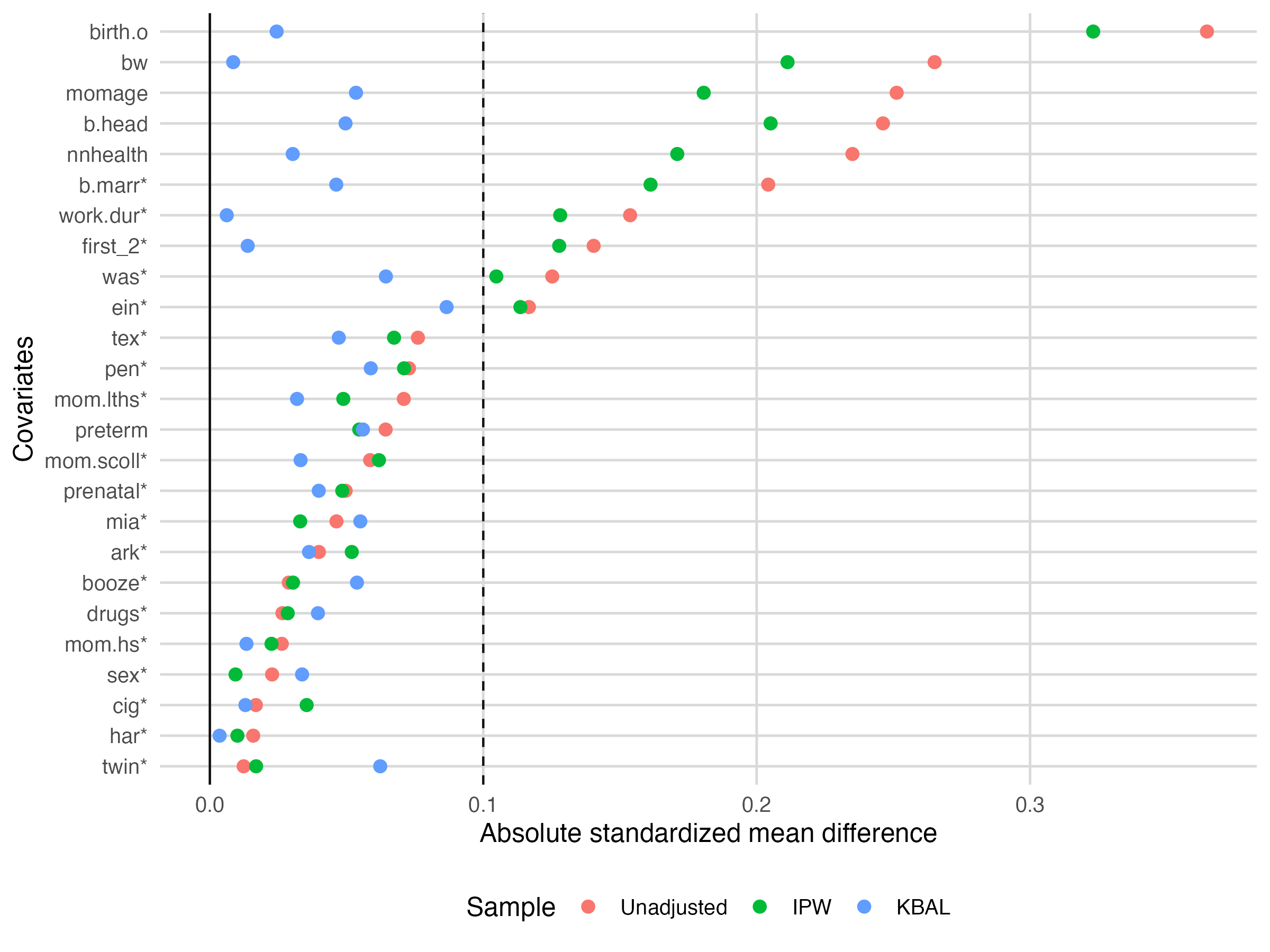}
    \smallskip
    \textbf{(a)} IHDP A
\end{minipage}\hfill
\begin{minipage}[t]{0.48\textwidth}
    \centering
    \includegraphics[width=\linewidth]{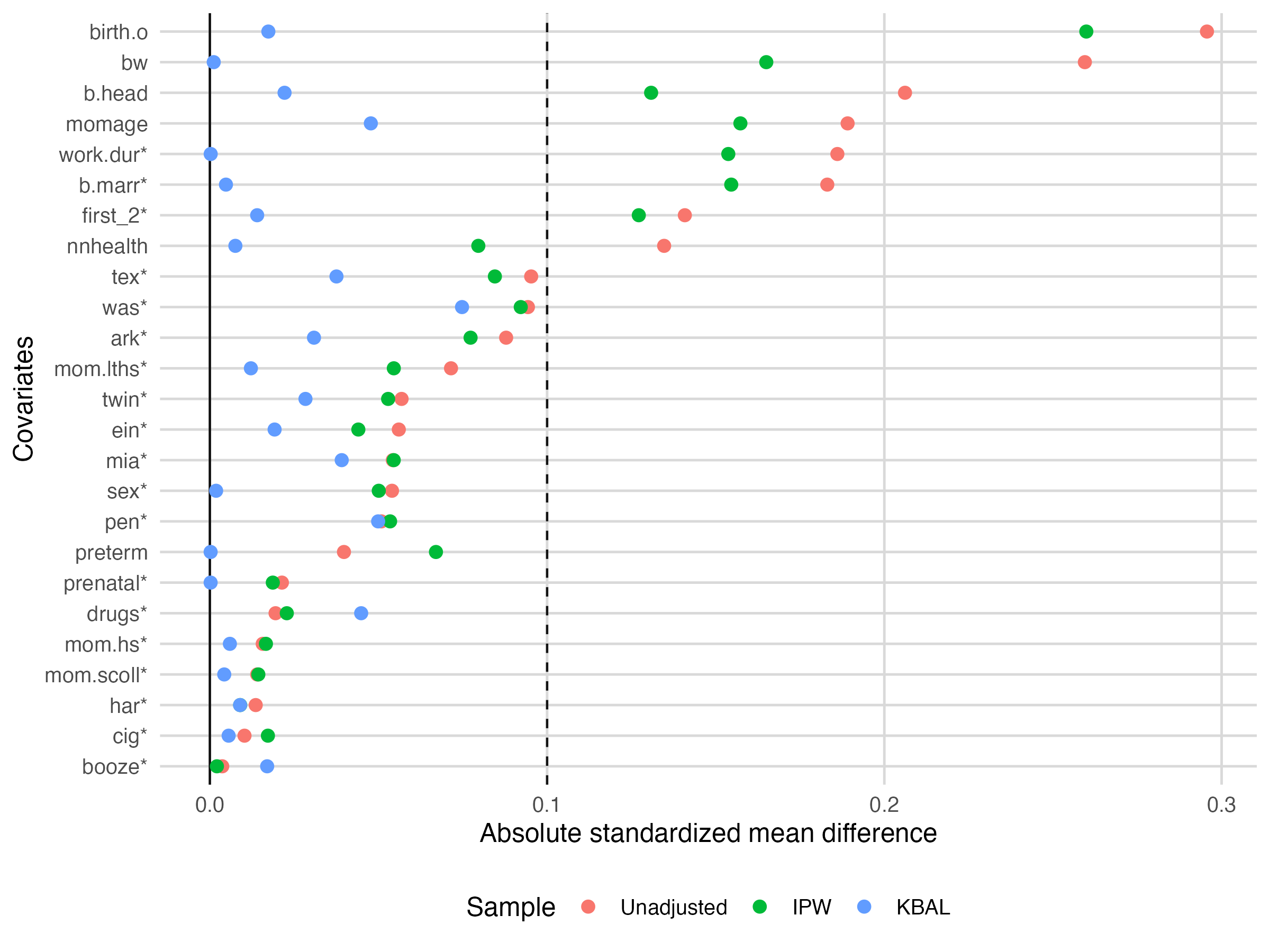}

    \smallskip
    \textbf{(b)} IHDP B
\end{minipage}

\vspace{0.5em}

\begin{notes} Absolute standardized mean differences before and after IPW and KBAL weighting for one IHDP replication. Stars denote binary covariates. Left: IHDP A, right: IHDP B. KBAL reduces covariate imbalance and keeps most covariates below the balance threshold, while unadjusted and IPW show larger imbalance, especially for continuous covariates.
\end{notes}
\end{figure}
\begin{table}[!htbp] 
\centering 
\caption{Ranges of standardized continuous covariates in the IHDP data}
\label{tab:summary_neonatal} 
\begin{tabular}{@{\extracolsep{5pt}}lccc} 
\\[-1.8ex]\hline 
\hline \\[-1.8ex] 
Variable & \multicolumn{1}{c}{N} & \multicolumn{1}{c}{Min} & \multicolumn{1}{c}{Max} \\ 
\hline \\[-1.8ex] 
Birth Order & 747 & -0.88 & 2.24 \\ 
Birth Weight & 747 & -2.73 & 1.51 \\ 
Head Circumference & 747 & -3.80 & 2.60 \\ 
Mother's Age at Birth & 747 & -1.85 & 2.95 \\ 
Neonatal Health Index & 747 & -5.13 & 2.37 \\ 
Weeks Born Preterm & 747 & -1.85 & 2.99 \\ 
\hline \\[-1.8ex] 
\end{tabular} 
\begin{flushleft}
\footnotesize
\textit{Notes:} Entries report the ranges of the standardized continuous covariates used in the IHDP analysis. Since the variables are standardized before analysis, means are approximately zero and standard deviations are one.
\end{flushleft}
\end{table}
\begin{table}[htbp]
\centering
\caption{Summary statistics for binary covariates (IHDP data).}
\label{tab:summary_binary}
\begin{tabular}{@{\extracolsep{5pt}}lccccc} 
\\[-1.8ex]\hline 
\hline \\[-1.8ex] 
Variable & No & Yes \\
\hline
Mother Married at Birth & 358 & 389 \\
Mother Drank Alcohol During Pregnancy & 642 & 105 \\
Mother Smoked During Pregnancy & 479 & 268 \\
Mother Took Drugs During Pregnancy & 30 & 717 \\
Received Prenatal Care & 27 & 720 \\
Child is Male & 363 & 384 \\
Twin Birth Indicator & 677 & 70 \\
Mother Worked During Pregnancy & 303 & 444 \\
\hline
\end{tabular}
\begin{flushleft}
\footnotesize
\textit{Notes:} Entries report the number of observations in each category for the listed binary covariates.
\end{flushleft}
\end{table}
\begin{figure}[H]
  \centering
  \caption{Comparison of propensity score distributions for treated and control groups in the IHDP dataset.}
  \label{fig:propensities_IHDP}
  \includegraphics[width=1\textwidth]{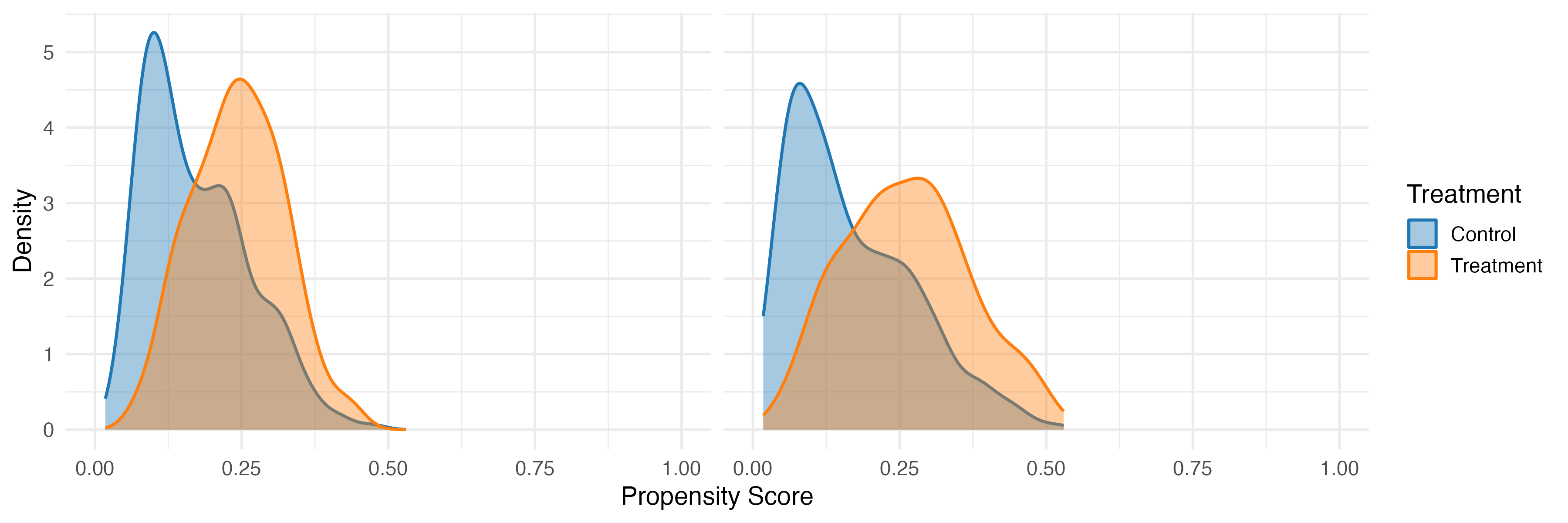}
\caption*{\footnotesize\textit{Notes:}   The left panel corresponds to IHDP Setting A and the right panel to Setting B. Propensity scores are estimated using logistic regression. Limited overlap in the lower and upper tails indicates violations of the positivity assumption, which can lead to unstable propensity score weights and bias in treatment effect estimation.} \end{figure}

\section{Proofs}
\label{Appendix_Proofs}

This section provides the proofs of the theoretical results from the main text. We proceed in the order in which the results are presented. First, we prove Theorem~\ref{thm:DIM_estimator}, showing that the weighted difference-in-means estimator is unbiased for the ATE under the balancing and linearity conditions. We then establish Proposition~\ref{fkbal_conistency}, which shows that $f_{kbal}$ is a suitable weighting function.

\begin{proof}[\textbf{Proof of Theorem \ref{thm:DIM_estimator}}] \label{proof:DIM_estimator}
This proof is based on Theorem 1 of \citet{hazlett2016kernel}, which establishes unbiasedness for the ATT.
We first define the sample average treatment effect (SATE) as
$$\mathrm{SATE} \coloneqq \frac{1}{N}\sum_{i=1}^N (Y_i(1)-Y_i(0)).$$
Since $\mathbb{E}[\mathrm{SATE}] = \mathrm{ATE}$, it suffices to show that $\widehat{\mathrm{DIM}}_w$
is unbiased for $\mathrm{SATE}$. 
Recall the weighted difference-in-means estimator 
\begin{equation*}
\widehat{\mathrm{DIM}}_w =\sum_{i \in \mathcal{S}_1} w_i^1 Y_i^{obs} - \sum_{i \in \mathcal{S}_0} w_i^0 Y_i^{obs}.
\end{equation*}
By SUTVA, $Y_i^{obs}=Y_i(1)$ when $D_i=1$ and $Y_i^{obs}=Y_i(0)$ when $D_i=0$, so
\begin{equation*}
\widehat{\mathrm{DIM}}_w=\sum_{i \in \mathcal{S}_1} w_i^1 Y_i(1)-\sum_{i \in \mathcal{S}_0} w_i^0 Y_i(0).
\end{equation*}
Hence the estimation error is
\begin{equation*}
   \widehat{\text{DIM}}_w - \text{SATE}  
   = \left( \sum_{i \in \mathcal{S}_1} w_i^1 Y_i(1) -  \frac{1}{N} \sum_{i =1}^N Y_i(1) \right) - \left( \sum_{i \in \mathcal{S}_0} w_i^0 Y_i(0) - \frac{1}{N} \sum_{i =1}^N Y_i(0) \right).
\end{equation*}
We treat the two bracketed terms separately. By Assumption~\ref{Linearity of Expected Outcome}, there exist $\theta_1,\theta_0$ such that $$
\mathbb{E}[Y_i(d)\mid X_i=x]=\varphi(x)^\top\theta_d, \quad
Y_i(d) = \varphi(X_i)^\top \theta_d + \varepsilon_i^d \quad \text{for}
\quad
\mathbb{E}[\varepsilon_i^d \mid X_i] = 0, \quad d\in\{0,1\}. 
$$
Hence, 
\begin{align*}
     \sum_{i \in \mathcal{S}_1} w_i^1 Y_i(1) -  \frac{1}{N} \sum_{i =1}^N Y_i(1)  &= \sum w_i^1 \left(  \varphi(X_i)^\top \theta_1 + \varepsilon_i^1\right )- \frac{1}{N} \sum_{i =1}^N \left( \varphi(X_i)^\top \theta_1 + \varepsilon_i^1\right ) \\
    &= \theta_1^\top \left( \sum_{i \in \mathcal{S}_1} w_i^1
    \varphi(X_i)- \frac{1}{N}\sum_{i =1}^N  \varphi(X_i)\right)+\sum_{i \in \mathcal{S}_1} w_i^1 \varepsilon_i^1 - \frac{1}{N}\sum_{i =1}^N \varepsilon_i^1.
  \end{align*}
By the balancing constraint $\sum_{i \in \mathcal{S}_1} w_i^1\varphi(X_i)=\frac1N\sum_{i=1}^N\varphi(X_i)$, the entire $\theta_1$-term is exactly zero. Therefore
\begin{equation*}
         \sum_{i \in \mathcal{S}_1} w_i^1Y_i(1)-\frac1N\sum_{i=1}^N Y_i(1) =\sum_{i \in \mathcal{S}_1} w_i^1\varepsilon_i^1-\frac1N\sum_{i=1}^N\varepsilon_i^1.
         \end{equation*}
The right bracket follows analogously. Combining the two steps gives
\begin{equation*}
         \widehat{\mathrm{DIM}}_w-\mathrm{SATE}=
\left(\sum_{i \in \mathcal{S}_1} w_i^1\varepsilon_i^1-\frac1N\sum_{i=1}^N\varepsilon_i^1\right)
-
\left(\sum_{i \in \mathcal{S}_0} w_i^0\varepsilon_i^0-\frac1N\sum_{i=1}^N\varepsilon_i^0\right).
\end{equation*}
Then, using the Law of Iterated Expectations (LIE) conditioning on $(X_i,D_i)$ we obtain 
\begin{align*}
\mathbb{E}\!\left[\widehat{\mathrm{DIM}}_w-\mathrm{SATE}\right]
&=\mathbb{E}\!\left[\mathbb{E}\!\left[\widehat{\mathrm{DIM}}_w-\mathrm{SATE}\middle|\, X_i,D_i \right]\right] \\
&=\mathbb{E}\!\left[\mathbb{E}\left[\left(\sum_{i \in \mathcal{S}_1} w_i^1\varepsilon_i^1-\frac1N\sum_{i=1}^N\varepsilon_i^1\right)
-\left(\sum_{i \in \mathcal{S}_0} w_i^0\varepsilon_i^0 -\frac1N\sum_{i=1}^N\varepsilon_i^0\right)\,\middle|\, X_i,D_i
\right]\right].
\end{align*}
We first consider the inner conditional expectation.
Since $w^1,w^0$ are measurable functions of the observed covariates and treatment assignments, they are fixed conditional $(X_i,D_i)$. Under Assumption~\ref{KBAL_unconfoundedness} and
$\mathbb{E}[\varepsilon_i^d\mid X_i]=0$ for $d\in\{0,1\}$, we have
\begin{equation*}
\mathbb{E}\!\left[\sum_{i:D_i=d} w_i^d\varepsilon_i^d \,\middle|\, X_i,D_i\right] = \sum_{i:D_i=d} w_i^d\,\mathbb{E} [\varepsilon_i^d\mid X_i]=0,\quad \text{and}\quad 
\mathbb{E}\!\left[\frac1N\sum_{i=1}^N\varepsilon_i^d \,\middle|\, X_i,D_i\right]=0.
\end{equation*}

Therefore,
\begin{equation*}
\mathbb{E}\!\left[\widehat{\mathrm{DIM}}_w-\mathrm{SATE}\middle|\, X_i,D_i \right]=0. 
\end{equation*}
And consequently
\begin{equation*}
\mathbb{E}\!\left[\widehat{\mathrm{DIM}}_w-\mathrm{SATE}\right]=0,
\quad\text{and}\quad
\mathbb{E}\!\left[\widehat{\mathrm{DIM}}_w\right]=\mathbb{E}[\mathrm{SATE}].
\end{equation*}
Finally, $\mathbb{E}[\mathrm{SATE}]=\mathrm{ATE}$ under the data-generating
distribution, hence $\widehat{\mathrm{DIM}}_w$ is unbiased for the ATE.
\end{proof}

\begin{proof}[\textbf{Proof of Proposition~\ref{fkbal_conistency}}]
By assumption,
\[
\hat{\tau}_0(X_i) \xrightarrow{p} \tau(X_i),
\qquad
\hat{\tau}_1(X_i) \xrightarrow{p} \tau(X_i), 
\]
for a fixed unit \(i\).
The aggregated estimator is
\[
\hat{\tau}(X_i)
=
f_{kbal,i}\hat{\tau}_0(X_i)
+
(1-f_{kbal,i})\hat{\tau}_1(X_i).
\]
Subtracting \(\tau(X_i)\) gives
\[
\hat{\tau}(X_i)-\tau(X_i)
=
f_{kbal,i}\left(\hat{\tau}_0(X_i)-\tau(X_i)\right)
+
(1-f_{kbal,i})\left(\hat{\tau}_1(X_i)-\tau(X_i)\right).
\]
Since \(f_{kbal,i}\in[0,1]\), both coefficients are bounded. Hence, by Slutsky's theorem,
\[
f_{kbal,i}\left(\hat{\tau}_0(X_i)-\tau(X_i)\right)
\xrightarrow{p}0,
\qquad
(1-f_{kbal,i})\left(\hat{\tau}_1(X_i)-\tau(X_i)\right)
\xrightarrow{p}0.
\]
Therefore,
$\hat{\tau}(X_i)-\tau(X_i)\xrightarrow{p}0, $
which implies $\hat{\tau}(X_i)\xrightarrow{p}\tau(X_i).$
\end{proof}

\section{Implementation in \texttt{R}} \label{Appendix_Implementation}

All simulations are conducted in $\texttt{R}$ using RStudio as the
development environment \citep{r_2025, rstudio_2025}. The code for reproducibility is provided at
\url{https://github.com/karo93/KBalweighting_Trees}.  

For the Monte Carlo simulation study we use the packages $\texttt{grf}$ \citep{grf_package}, $\texttt{Rforestry}$ \citep{kuenzel2025rforestry} and $\texttt{kbal}$ \citep{hazlett2026kbal}, along with custom extensions based on $\texttt{causalToolbox}$ \citep{kuenzel_causaltoolbox}.
Causal forests are estimated using the \texttt{grf} package. Metalearners are implemented using the $\texttt{causalToolbox}$ package,  with random forest base learners from \texttt{Rforestry}.
In each Monte Carlo run, we draw a training sample used for model fitting and a test sample used for evaluation. Kernel balancing weights are computed on the covariates of the sample on which weighting is applied: training covariates for weighted causal forests and test covariates for the X-learner aggregation weights.

\subsection*{Choice of Parameters}
First we provide an overview of the \texttt{kbal()} function, which is part of the \texttt{kbal} package. The \texttt{kbal()} function constructs optimal weights based on the sets $\mathcal{U}$ and $\mathcal{V}$ described in Table~\ref{table:weight_sample}. Below, we summarize the main tuning parameters used in the \texttt{kbal} function for computing the optimal weights:

\begin{itemize}
    \item \texttt{allx}: Data matrix of covariates. Covariates from the training set (\texttt{X}) are used to compute causal forest weights, while covariates from the test set (\texttt{Xtest}) are used to compute weights for the X-Learner.
\item \texttt{sampled}: Numeric vector of length \(N\) identifying the units to be reweighted. Units with \(\texttt{sampled}=1\) form the reweighted group, while units with \(\texttt{sampled}=0\) form the target group. In our implementation, it is constructed from the treatment indicator.
\item \texttt{sampledinpop}: Logical. Must be set to \texttt{TRUE} to compute the correct weights (see Table~\ref{table:weight_sample}).
    \item \texttt{b}: Scaling factor for Gaussian kernel distances; here set to \texttt{length(allx)}. 
    \item \texttt{linkernel}: Logical, set in Monte Carlo study \texttt{TRUE} to compute kernel balancing weights with a linear kernel $K = XX^\top$, improving computational efficiency while achieving mean balance on the first moments of $X$.
\end{itemize}
\subsection*{Weighted Causal Forest}
The \texttt{grf} package \citep{grf_package} allows to specify observation-level weights via the
\texttt{sample.weights} argument. These weights are interpreted as describing a target population in which observation $X_i$ is sampled with probability proportional to \texttt{sample.weights[i]} and should not be confused with the forest similarity weights
$\alpha_i(x)$ used in prediction (see Algorithm~\ref{algo:grf_ct}). By default, all observations receive equal weight, so that the target population coincides with the empirical distribution of the training data. For causal validity, the sample weights should be constructed independently of the observed outcomes, conditional on the covariates and treatment assignment. Kernel-balancing weights satisfy this requirement, since they are functions of the covariates and treatment indicators, \(X_i\) and \(D_i\), but not of the observed outcomes \(Y_i\) \citep{hazlett2016kernel}.
We compute these on the training sample and pass these to the \texttt{grf::causal\_forest} function via the
\texttt{sample.weights} argument. For conditional estimands such as \(\tau(x)\), sample weights do not redefine the estimand itself. Instead, they influence how the forest is trained: observations with larger weights receive more importance in the splitting and estimation procedure, so the fitted forest emphasizes regions of the covariate distribution that receive larger weight. In our setting, KBal weights are used as an initial global balancing step to make the treated and control samples more comparable in the training data before the causal forest learns the local partitioning structure and estimates heterogeneous treatment effects.

This should be distinguished from a fully local balancing approach, in which balancing weights would be recomputed within each leaf or neighborhood. Such an approach would more directly target covariate-specific balance, but is not how sample weights are incorporated in the standard \texttt{grf} implementation. All other tuning parameters of the causal forest are kept at their default values.

\subsection*{Weighted X-Learner}
For the simulation study, we adapted the XRF function from the \texttt{causalToolbox} package rather than relying on a general wrapper with dynamically passed hyperparameters. This was necessary because the standard implementation uses the estimated propensity score $\hat{p}(x)$ in the final aggregation step, whereas our weighted specification replaces this aggregation weight with the KBal-based weight described in Section~\ref{sec:Weighted_XF}. The first- and second-stage regression steps estimate the response functions $\hat{\mu}_0(x)$ and $\hat{\mu}_1(x)$ and the imputed treatment effect functions
$\hat{\tau}_0(x)$ and $\hat{\tau}_1(x)$ are implemented exactly as in the original XRF implementation. In particular, we use identical tuning parameters, including subsampling rates and honesty settings. This ensures comparability across both methods, because differences between the weighted and unweighted X-learner arise solely from the choice of aggregation function, while allowing controlled changes to the weighting function. 

As discussed in Section~\ref{sec:Kernel_weighting}, the KBal aggregation weights are sample-specific, because a closed-form weighting function is not available. To avoid dimension mismatches between the training and test samples, the KBal aggregation weights are computed directly with \(X_{\text{test}}\). Since the weights only enter the final aggregation step and do not use evaluation outcomes, this does not involve outcome leakage. The fitted components \(\hat{\tau}_0(x)\) and \(\hat{\tau}_1(x)\) are trained only on the training sample, while evaluation outcomes are used solely for performance assessment.

Furthermore, the notation $f_{kbal,i}$ may appear to create a dimension mismatch, since the weights $w^0$ and $w^1$ are obtained from separate balancing problems for the treated and control groups. However, in the implementation, the \texttt{kbal} function from the kernel balancing package \citep{hazlett2026kbal} returns both weight vectors on the same evaluation sample. Observations belonging to the group that defines the target distribution receive weight one, while the remaining observations receive the optimized balancing weights. Hence, $w_i^0$ and $w_i^1$ are both defined in the implementation for each observation $i \in \{1,\dots,N\}$ and therefore
 \[
    f_{kbal,i} = \frac{w_i^{0}}{w_i^{0}+w_i^{1}}
    \]
is computed observation by observation.
\subsection{Algorithmic Details} \label{app:algorithm}

\subsection*{Generalized Causal Forest}
The original GRF framework targets parameters of the form $\theta(x) = \xi^\top \beta(x)$, where $\beta(x)$ is a vector of local regression coefficients, and $\xi $ is a contrast vector. These parameters are identified via the local moment condition:
\begin{equation}
    \mathbb{E} \left[ \left(Y_i - D_i^\top \beta(x) - c(x) \right)(1, D_i)^\top \;\middle|\; X_i = x \right] = 0,
\end{equation}
where \( c(x) \) is an intercept term. The contrast vector \( \xi \) determines which component of \( \beta(x) \) is being targeted. In our case, where the treatment variable \( D_i \) is binary, we focus on estimating \( \tau(x) \). This corresponds to setting \( \xi = (0,1)^\top \), so that \( \theta(x) = \tau(x) \). Algorithm \ref{algo:grf_ct} and \ref{algo:gradienttree} are a rewritten and simplified version of the GRF framework, where we focus just on $\tau(x)$. Note we used the GRF formulation in its original vector-valued notation, so here, \(D_i\) may be vector-valued. 

  \begin{algorithm}[H]
        \SetAlgoLined
        \KwIn{$N$ training examples $(X_i, Y_i, D_i)$}
        \KwOut{CATE estimator $\hat{\tau}(x)$}
        \For{$b = 1, \dots, B$}{
            \textbf{Step 1:} Draw a random subsample of size $s$ from $\{1, \dots, n\}$ without replacement, 
    and then divide it into two disjoint sets of size $|\cal{I}|$ $ = \left\lfloor \frac{s}{2} \right   \rfloor$ and $|\cal{J}|$ $= \left\lceil \frac{s}{2} \right\rceil$  \\[0.5cm]
            \textbf{Step 2:} Grow a gradient tree ${\cal{T}}_b$ using the $\cal{I}$ sample.   \\[0.5cm]
            \textbf{Step 3:} Given a test point $x$,  assign samples in $\cal{J}$ to leaf nodes based on ${\cal{T}}_b$. Compute
           $$ \alpha_{bi}(x) = \frac{\mathds{1}(\{X_i \in L_b(x)\})} {|L_b(x)|}.$$
             \\[0.5cm]
        }
        \textbf{Final Estimate:} Compute the weighted estimator
        \[
        \hat{\tau}(x) = \left( \sum_{i=1}^n \alpha_i(x) (D_i - D_\alpha)(D_i - D_\alpha)^\top \right)^{-1} \sum_{i=1}^n \alpha_i(x) (D_i - D_\alpha)(Y_i - Y_\alpha). 
        \]
        \caption{Generalized Causal Forest Algorithm}
\label{algo:grf_ct}
    \end{algorithm}
\FloatBarrier 

\FloatBarrier 
\vspace{0.5cm}
  \begin{algorithm}[h]
        \SetAlgoLined
            \textbf{Step 1:}  For a parent node $P$, compute $A_P$
\[
A_P = \frac{1}{|\{i : X_i \in P\}|} \sum_{\{i : X_i \in P\}} (D_i - \overline{D}_P)(D_i - \overline{D}_P)^\top, 
\] 
where $\overline{D}_P$  is the average taken over the parent $P$. 
  \\[0.5cm]
\textbf{Step 2:} Compute the pseudo-outcomes \(\rho_i\)  for each observation \(i\) with \(X_i \in P\) \[
\rho_i = A_P^{-1} (D_i - \overline{D}_P) \left( Y_i - \overline{Y}_P - (D_i - \overline{D}_P)\hat{\tau}_P \right),
\]
where $\hat{\tau}_P $ is the least-squares regression solution of $Y_i$ on $D_i$ and $ \overline{Y}_P$ is the average taken over the parent $P$.  \\[0.5cm]

            \textbf{Step 3:} Split $P$ into two children $C_1$ and $C_2$ such as to maximize the criterion $$\tilde{S}(C_1, C_2) = \sum \limits_{j=1}^2 \frac{1}{\left| \{i : X_i \in C_j\} \right|} \left( \sum \limits_{\{i : X_i \in C_j\}} \rho_i \right)^2. $$
        
        \caption{Gradient Tree ${\cal{T}}_b$}
\label{algo:gradienttree}
    \end{algorithm}
\vspace{0.5cm}
\FloatBarrier 

\subsection*{Metalearners} \label{app:metalearners}
One of the simplest forms of a metalearner is the S-learner, where the letter S stands for “single” because it fits a single model to estimate the CATE by including the treatment variable as a covariate
\[
\hat{\mu}(x, d) = \mathbb{E}[Y_i^{obs} \mid X_i = x, D_i=d].
\]
The estimated CATE is then obtained as the difference between predicted outcomes under treatment and control.
\[
\hat{\tau}_S(x) = \hat{\mu}(x, D_i=1) - \hat{\mu}(x, D_i=0).
\]
The corresponding procedure is summarized in Algorithm~\ref{algo:S_Learner}. The S-learner is preferable when treatment effects are expected to be small or structurally simple, and when the response functions for treated and control group have similar structure \citep{Comparing_Meta_Learners}. However, because many machine learning methods regularize and may downweight weak predictors, the S-learner can understate treatment effect heterogeneity when the treatment indicator has limited predictive power or when treatment effects are highly nonlinear.

\begin{algorithm}[!htbp]
\SetAlgoLined
\KwIn{Observations $(X_i, Y_i^{obs}, D_i)_{i=1}^N$}

\textbf{Step 1:} Estimate the conditional mean function
\[
\hat{\mu}(x,d) \approx \mathbb{E}[Y_i^{obs} \mid X_i=x, D_i=d]
\]
using a supervised learning method with $(X_i, D_i)$ as predictors.

\textbf{Step 2:} Estimate the CATE as
\[
\hat{\tau}_S(x) = \hat{\mu}(x,1) - \hat{\mu}(x,0).
\]
\caption{S-learner}
\label{algo:S_Learner}
\end{algorithm}

 The T-learner (T short for “Two”) is similar to the S-learner, but differs in that it uses the treatment indicator to split the estimation of the response function into two separate models.  Specifically, the treatment and control response functions, \begin{equation} \label{eq:response}
    \mu_1(x) = \mathbb{E}[Y_i(1) \mid X_i = x]~ \text{ and }~ \mu_0(x) = \mathbb{E}[Y_i(0) \mid X_i = x],  \notag
\end{equation} are each estimated independently by (not necessarily identical) base learners, using only the observations from the treated and control groups, respectively. Then the T-learner is defined as
\[
\hat{\tau}_T(x) \coloneqq \hat{\mu}_1(x) - \hat{\mu}_0(x).
\]
The main steps of the T-learner are given in Algorithm~\ref{algo:tlearner}. The T-learner is preferable in settings, where the response functions
$\mu_1(x)$ and $\mu_0(x)$ differ substantially in functional form, sparsity, or
smoothness, or when observational selection induces group-specific
covariate-outcome relationships. A limitation is that the two models do not share information across groups, hence, performance may degrade when treated and control units follow similar response surfaces, like in well-randomized settings. The method is also sensitive to sample imbalance: when $N_1 \ll N_0$
(or vice versa), the response function estimated for the smaller group may overfit,
leading to noisy CATE estimates.

\begin{algorithm}[!htbp]
\SetAlgoLined
\KwIn{Observations $(X_i, Y_i, D_i)_{i=1}^N$}

\textbf{Step 1:} Using only treated observations $(D_i=1)$, fit a supervised learning model to estimate
\[
\hat{\mu}_1(x) \approx \mathbb{E}[Y_i^{obs} \mid X_i=x, D_i=1].
\]

\textbf{Step 2:} Using only control observations $(D_i=0)$, fit a supervised learning model to estimate
\[
\hat{\mu}_0(x) \approx \mathbb{E}[Y_i^{obs} \mid X_i=x, D_i=0].
\]

\textbf{Step 3:} Estimate the CATE as
\[
\hat{\tau}_T(x) = \hat{\mu}_1(x) - \hat{\mu}_0(x).
\]

\caption{T-learner}
\label{algo:tlearner}
\end{algorithm}

For metalearners, pointwise uncertainty is estimated using a stratified nonparametric bootstrap. Algorithm~\ref{algo:bootstrap_meta} describes the resulting procedure.  In each bootstrap replication, treatment and control observations are resampled separately, the corresponding weighted or unweighted X-learner is refitted, and the empirical standard deviation of the resulting CATE predictions is used as the pointwise standard error. As discussed in Appendix~\ref{app:Appendix_monte}, confidence intervals are constructed using a normal approximation centered at \(\hat{\tau}(x)\).

\begin{algorithm}[!htbp]
\SetAlgoLined
\KwIn{
$N$ training observations $(X_i, D_i, Y_i)$}
\KwOut{Pointwise $(1-\alpha)$ confidence interval for $\tau(x)$}

\textbf{Step 1:}  Fit the metalearner on the full training sample and obtain $\hat{\tau}(x)$. \\
\For{$r = 1, \dots, R$}{
    \textbf{Step 2:} Draw stratified bootstrap samples $\mathcal{S}_1^{(r)}$ and $\mathcal{S}_0^{(r)}$ with replacement:
  
    \textbf{Step 3:} Construct the bootstrap dataset
    \[
    \mathcal{D}^{(r)} = \{(X_i, D_i, Y_i) : i \in \mathcal{S}_1^{(r)} \cup \mathcal{S}_0^{(r)}\}
    \]
  and refit the corresponding weighted or unweighted X-learner on $\mathcal{D}^{(r)}$ and compute the bootstrap prediction $\hat{\tau}^{*}_r(x)$.
}

\textbf{Step 4:} Estimate the pointwise standard error using the empirical standard deviation
\[
\widehat{\sigma}(\hat{\tau}(x)) =
\mathrm{sd}\!\left( \hat{\tau}^{*}_1(x), \dots, \hat{\tau}^{*}_R(x) \right).
\]
\textbf{Final Output:} Construct the $(1-\alpha)$ confidence interval
\[
\hat{\tau}(x) \pm z_{1-\alpha/2} \, \widehat{\sigma}(\hat{\tau}(x)).
\]
\caption{Bootstrap Confidence Intervals for Metalearners}
\label{algo:bootstrap_meta}
\end{algorithm}

\end{document}